\newtheorem{definition}{Definition}
\newtheorem{lemma}{Lemma}
\newtheorem{theorem}{Theorem}
\DeclareMathOperator*{\argmax}{arg\,max}
\DeclareMathOperator*{\argmin}{arg\,min}
\newcommand{\bbc}{\,\widehat{\otimes}\,}
\newcommand{\ps}{\mathcal{P}}
\newcommand{\U}{\mathcal{U}}
\newcommand{\ignore}[1]{}
\begin{document}

\title{A generalisation of Nash's theorem with higher-order functionals}
\author{Julian Hedges \\
\emph{Queen Mary University of London, London E1 4NS, UK} \\
\texttt{jules.hedges@eecs.qmul.ac.uk}}
\date{}
\maketitle

\begin{abstract}
The recent theory of sequential games and selection functions by
Martin Escard\'o and Paulo Oliva is extended to games in which players
move simultaneously.
The Nash existence theorem for mixed-strategy equilibria of
finite games is generalised to games defined by selection functions.
A normal form construction is given which
generalises the game-theoretic normal form, and its soundness is proven.
Minimax strategies also generalise to the new class of
games and are computed by the Berardi-Bezem-Coquand functional,
studied in proof theory as an interpretation of the axiom of countable
choice.
\end{abstract}

\section{Introduction}

The notion of \emph{optimisation} is common to many areas of applied
mathematics, such as game theory and linear and nonlinear
programming. Typically we have a set $X$ of
\emph{choices} and a function $p$ mapping each $x \in X$ to a real
number $p(x)$, which we might call the \emph{value} or \emph{cost} of
$x$. From this we can define a natural notion of \emph{optimality}: a
point $y \in \mathbb{R}$ is optimal just if $y \geq p(x)$ for all $x
\in X$ and $y = p(x_0)$ for some $x_0 \in X$. We usually refer to $y$
by a notation such as
\[ y = \max_{x \in X} p(x) \]
The point $x_0$ is also interesting: it is a point at which $p$
\emph{attains} its optimal value, and we refer to it as
\[ x_0 = \argmax_{x \in X} p(x) \]
(Of course, while $y$ is guaranteed to be unique when it exists, $x_0$
is not necessarily unique; we only require that $\argmax$ chooses some
value for $x_0$.) These notations are connected by the equation $y = p(x_0)$, or
\[ \max_{x \in X} p(x) = p \left( \argmax_{x \in X} p(x)
\right) \]

Suppose we fix the set $X$ and assume that $\max_{x \in X}
p(x)$ exists for all functions $p : X \to \mathbb{R}$ (as when $X$ is
finite, for example). We can now define a function by
\[ \varphi (p) = \max_{x \in X} p(x) \]
$\varphi$ has range $\mathbb{R}$, and its domain is the function set $X
\to \mathbb{R}$, that is, the set of all functions with domain $X$ and
range $\mathbb{R}$. We therefore write
\[ \varphi : (X \to \mathbb{R}) \to \mathbb{R} \]
We call $\varphi$ a \emph{higher-order function}, that is, a function
whose domain is itself a set of functions. We can also define
\[ \varepsilon (p) = \argmax_{x \in X} p(x) \]
obtaining a higher-order function
\[ \varepsilon : (X \to \mathbb{R}) \to X \]
satisfying
\[ \varphi (p) = p (\varepsilon (p)) \hbox{ for all } p : X \to
\mathbb{R} \]

Using the concept of a higher-order function we can make a large
generalisation of the properties of $\max$ and
$\argmax$. For any sets $X$ and $R$, a function $\varphi : (X \to R)
\to R$ will be called a \emph{quantifier} and a function $\varepsilon
: (X \to R) \to X$ will be called a \emph{selection function}. We
say that $\varepsilon$ \emph{attains} $\varphi$ just if $\varphi (p) =
p (\varepsilon (p))$ for all $p : X \to R$. $\max$ and $\argmax$
become the prototypical examples of a quantifier and a selection function
attaining it. A very different example of a quantifier is a fixed
point operator $\mu : (X \to X) \to X$ which has the property that
$\mu(p)$ is always a fixed point of $p$, that is, $\mu(p) =
p(\mu(p))$. Thus a fixed point operator attains itself. Quantifiers
where $R$ is the set of truth-values appear naturally in logic. These concepts were
introduced and applied to the theory of sequential
games by Martin Escard\'o and Paulo Oliva in a series of papers
summarised in \citep{escardo11}. 

What is a game? Typically, some players take turns choosing between
sets of legal moves, which may be constrained by previous players'
moves. The sequence of moves made by the players is called a
\emph{play} of the game. Usually, the rules of the game guarantee that
every play terminates after a finite number of moves, and then
uniquely determine which player has won the play.

In the theory of games as introduced by \citep{neumann44}
the notion of a player winning a play is not
used. Rather, for each player the rules of the game define an
\emph{outcome function} mapping each play of the game to a real number
called the \emph{utility} of the play for that player. This
generalisation is important for applications of game theory to
economics, where utility often represents profit. In the game played by two
competing firms, for example, each firm is interested in maximising its own profit,
and does not care (in the short term, at least) how much profit its
competitor makes. Of course, a firm's profits will be affected by the
moves of its competitor, and vice versa. A central problem of
game theory is to determine which moves each player should
choose in order to maximise their utility. The theory of games as
surveyed for example in \citep{fudenberg91} will be referred to as
\emph{classical game theory}.

Suppose during the course of a play some player must choose between
some set $X$ of moves. Taking the usual assumption of common knowledge
of rationality (that is, the players play optimally, and they know
that each other will play optimally, and so on) the future of the play
after making each choice of $x \in X$ is sufficiently well determined
that a utility $p(x) \in \mathbb{R}$ can be assigned to each $x \in
X$. In classical game theory, a rational player will always choose
$\argmax_{x \in X} p(x)$. By replacing $\mathbb{R}$ with an
arbitrary set $R$ and argmax with an arbitrary selection function
$\varepsilon : (X \to R) \to X$, a rich theory of generalised
games results, with deep connections to proof theory and
theoretical computer science \citep{escardo10d, escardo11b}.

The games which have been described so far are the so-called
\emph{sequential} games. In the more usual language of classical game
theory this can be read as \emph{non-branching extensive form games of perfect
  information}. However there are games which cannot be
described as a sequence of moves. These are the so-called
\emph{simultaneous} games, or \emph{games of imperfect information}. A
well known example is
rock-paper-scissors; a more important example is the simultaneous
pricing of goods by supermarkets. von Neumann and
Morgenstern proved that every game can be described as a
simultaneous game, called its \emph{normal} or \emph{strategic form}. The central idea
of this proof is that players simultaneously choose \emph{contingent
  strategies}, higher-order functions which choose the next move given
the play up to that point, and so play the game on behalf of the
player. In this paper we consider a notion of simultaneous games that
encompasses Escard\'o and Oliva's generalised sequential games in a
similar way.

In section 3, generalised simultaneous games and their appropriate
notion of equilibrium are defined. In section 4 a class of games, the
so-called \emph{multilinear games}, is defined, and it is proven that
games of this kind always have an equilibrium (theorem
\ref{multilinear-existence-thm}). This is used in section 5 to prove
the key result of this paper (theorem \ref{generalised-nash-theorem}),
a natural generalisation of Nash's theorem for the existence of
mixed-strategy equilibria to games defined by arbitrary
quantifiers. In section 6 a mapping from sequential to simultaneous
games is defined analagous to the normal form construction in the
classical theory, and its soundness is proven (theorem
\ref{normal-form-soundness}). In section 7 we show an
interesting connection to proof theory, namely that the \emph{binary
  Barardi-Bezem-Coquand functional} computes minimax strategies of
games, a result that suggests a deeper connection between proof theory
and generalised games.

\section{Preliminaries}

If $X$ and $Y$ are sets then $X \to Y$ denotes the set of all
functions with domain $X$ and range $Y$ (this is often denoted $Y^X$,
a notation we avoid in order to avoid writing exponential towers for
higher-order functions). Cartesian products of sets are denoted $\prod$ and
bind tighter then $\to$, so for example $\prod_{i \in I} X_i \to R$
means $(\prod_{i \in I} X_i) \to R$. The $i$th coordinate projection
of a tuple $\pi \in \prod_{i \in I} X_i$ is denoted $\pi_i$.

The following piece of notation, for manipulating products, will be
helpful. Let $I$ be a set and let $X_i$ be a set for each $i \in I$. If $x \in X_i$
and $\pi \in
\prod_{j \in I} X_j$ then we define $\pi (i \mapsto x) \in \prod_{j \in I} X_j$ by
\[ (\pi(i \mapsto x))_j = \begin{cases}
x &\hbox{ if } i = j \\
\pi_j &\hbox{ otherwise}
\end{cases} \]

We make use of Church's $\lambda$-notation for describing functions
anonymously. The function which might otherwise be written as $x
\mapsto 1 + x$ will be denoted $\lambda x^\mathbb{N} . 1 + x$, where $\mathbb{N}$ is the
domain of the anonymous function. For example we have
$(\lambda x^\mathbb{N} . 1 + x)(42) = 43$. A variable bound by a
$\lambda$ need not appear under the scope of the $\lambda$, for
example $\lambda x^X . 42$ is the constant function with the property
that $(\lambda x^X . 42)(x') = 42$ for all $x' \in X$.

A \emph{quantifier} is a function $\varphi \in S_R (X)$ where $S_R (X) =
(X \to R) \to \mathcal P (R)$, a definition introduced in
\citep{escardo11}. The \emph{domain} of a quantifier is
\[ \mbox{dom} (\varphi) = \{ p \in X \to R \mid \varphi (p) \neq \varnothing
\} \]
A quantifier with $\mbox{dom} (\varphi) = X \to R$ will be called
\emph{total}.

A \emph{selection function} is a function $\varepsilon \in J_R (X)$
where $J_R (X) = (X \to R) \to X$. Selection functions were first introduced in
\citep{escardo10a}. The quantifier
$\varphi \in S_R (X)$ is \emph{attained} by the
selection function $\varepsilon \in J_R (X)$ just if
\[ p (\varepsilon (p)) \in \varphi (p) \]
for all $p \in \mbox{dom} (\varphi)$. This definition of attainment
differs from Escard\'o and Oliva's, who require the condition to
hold for all $p \in X \to R$. For a total
quantifier (which are considered in section 5, and to which the main
theorem applies) the two definitions coincide.

For example, if $R = \mathbb R$
and $X$ is compact then the extreme value theorem (plus the axiom of
choice) implies that the maximum quantifier
\[ \varphi (p) = \begin{cases}
\left\{ \max_{x \in X} p(x) \right\} &\mbox{ if } p \mbox{ is
  continuous} \\
\varnothing &\mbox{ otherwise}
\end{cases} \]
is attained. (Note that we need the axiom of choice to collect all the
values into a single function.) A quantifier such as this whose values
have cardinality at most 1 will be called \emph{single-valued}. Since
the definition of $\varphi$ is clumsy we can use a new notation for
single-valued quantifiers, such as
\[ \varphi (p) = \left. \max_{x \in X} p(x) \right|_{\mbox{$p$ is
    continuous}} \]

We assume some point-set topology as covered, for example, in \citep{kelley55}
and elementary properties of topological vector
spaces \citep{conway90}. All topological vector spaces are assumed to
be $T_1$ throughout (this is no loss of generality because quotienting
a topological vector space by the closure of $\{ 0 \}$ always yields a
Hausdorff space). For reference, a subset $S$ of a real vector
space is called \emph{convex} iff for all $x, y \in S$ and $t \in
[0,1]$ we have $tx + (1 - t)y \in S$.

In part 4 we work with the class of
\emph{locally convex spaces}. The definition of a locally convex space
is technical and not necessary for our purposes; beyond theorem
\ref{kakutani-theorem} and lemma \ref{locally-convex-products} we only
need to know that every locally convex
space is a topological vector space. Every normed vector
space is locally convex; examples of locally convex spaces which are
not normable include the spaces of smooth functions
$C^\infty(\mathbb{R})$ and $C^\infty([0,1])$ and the space of
real-valued sequences $\mathbb{R}^\omega$ with convergence defined
pointwise. Locally convex spaces are covered in detail in \citep{conway90}.

\bigskip

\textbf{A note on foundations.} It is possible to define generalised
sequential games over models other than classical set theory. Indeed,
as explained in \citep{escardo11} it is sometimes necessary to work in
nonstandard models, for example when considering unbounded sequential games
(which are not considered in this paper). $J_R$ is a (strong) monad
and can be defined over any cartesian closed category (moreover the
closely related $K_R (X) = (X \to R) \to R$, which contains the total
single-valued quantifiers, is already well-known from
programming language theory where it is called the \emph{continuation
  monad}). The definitions of
generalised simultaneous game and abstract Nash equilibrium could be
formalised in a more general setting, but the proofs in section 4 use
classical set theory in an
essential way, so we find it easier to avoid foundational issues
altogether and work entirely in classical set theory.

\section{Generalised simultaneous games}

In this section we define the objects studied in this paper, namely
\emph{generalised simultaneous games} and \emph{generalised Nash
equilibria}. The definition of a generalised simultaneous game comes from the classical
definition of a normal-form game, but with the maximising behaviour of
players replaced with a specified quantifier. For the general
definition we do not require the number of players to be finite. The
related notion of \emph{generalised sequential game} will be defined
in section 6.

\begin{definition}[Generalised simultanous game]
A \emph{generalised simultaneous game} (with multiple outcome spaces),
denoted simply \emph{game} when not ambiguous, is a tuple
\[ \mathcal G = (I, (X_i, R_i, q_i, \varphi_i)_{i \in I}) \]
where $I$ is a nonempty set of \emph{players}, and for each $i \in I$,
\begin{itemize}
\item $X_i$ is a nonempty set of \emph{moves} for player $i$;
\item $R_i$ is a set of \emph{outcomes} for player $i$;
\item $q_i \in S \to R_i$ is the \emph{outcome
    function} for player $i$, where $S = \prod_{j \in I} X_j$ is the
  \emph{strategy space} of $\mathcal G$;
\item $\varphi_i \in S_{R_i} (X_i)$ is the \emph{quantifier} for
  player $i$.
\end{itemize}

We say that $\mathcal G$ has a \emph{single outcome space} if the
$R_i$ are equal and the $q_i$ are equal. In this case $\mathcal G$ is
determined by a tuple
\[ \mathcal G = (I, (X_i)_{i \in I}, R, q, (\varphi_i)_{i \in I}) \]
where $q \in S \to R$.

An element $x
  \in X_i$ is called a \emph{strategy} for player $i$ for
  $\mathcal G$. A tuple $\pi \in S$ is called a \emph{strategy
    profile} for $\mathcal G$. Throughout this paper the variables
  $\pi$, $\sigma$ and $\tau$ will range over strategies of a game.
\end{definition}

In general we need games with multiple outcome spaces to study
simultaneous games, and in particular to recover the classical Nash
theorem. However normal forms of generalised sequential games will
always have single outcome space.

The appropriate notion of equilibrium of a generalised simultaneous
game is called a \emph{generalised Nash equilibrium}. Before making
this definition, we first define some notation used
throughout this paper. Firstly we define the family of
\emph{unilateral maps} $\U_q^i$, which are used as a shorthand
notation but, when considered as a higher-order functions, are also
natural and interesting in their own right.

\begin{definition}[Unilateral map]
Let $I$ be a set, and for each $i \in I$ let $X_i$ and $R_i$ be sets.
Let $q = (q_i)_{i \in I}$ be a family of maps such that each
\[ q_i \in \prod_{j \in I} X_j \to R_i \]
We define the $i$th \emph{unilateral map}
\[ \U_q^i \in \prod_{j \in I} X_j \to (X_i \to R_i) \]
by
\[ \U_q^i (\pi)(x) = q_i(\pi(i \mapsto x)) \]
\end{definition}

Thus, the $i$th unilateral map computes the outcomes of unilateral
changes of strategy by the $i$th player in a game. Secondly, we
associate to every quantifier a set called its \emph{diagonal}.

\begin{definition}[Diagonal of a quantifier]
Let $\varphi \in S_R (X)$ be a quantifier. The \emph{diagonal}
of $\varphi$ is
\[ \Delta(\varphi) = \{ (p, x) \in (X \to R) \times
X \mid p(x) \in \varphi(p)
\} \]
\end{definition}

Now the equilibria of a generalised simultaneous game can be defined
in a very compact and (as will be seen) useful way.

\begin{definition}[Generalised Nash equilibrium]
Let $\mathcal G$ be a game with strategy space $S$. We define the
\emph{best response correspondence} $B \in S \to \ps (S)$ of $\mathcal
G$ by
\[ B(\pi) = \bigcap_{i \in I} B_i (\pi) \]
where the $B_i \in S \to \ps (S)$ are defined by
\[ B_i (\pi) = \{ \sigma \in S \mid (\U_q^i (\pi), \sigma_i) \in
\Delta(\varphi_i) \} \]

A \emph{generalised Nash equilibrium} of $\mathcal G$ is a fixed point
of $B$, that is, a strategy profile $\pi$ such that $\pi \in B(\pi)$.
\end{definition}

Unpacking this definition, we see that $\pi$ is a generalised Nash
equilibrium of $\mathcal G$ iff for each $i \in I$ we have
\[ q_i (\pi) \in \varphi_i (\lambda x^{X_i} . q_i (\pi (i \mapsto
x))) \]
When $X_i$ is compact, $q_i$ is continuous and $\varphi_i$ is the
quantifier
\[ \varphi_i (p) = \left. \max_{x \in X_i} p(x) \right|_{\mbox{$p$ is
    continuous}} \]
this reduces to
\[ q_i (\pi) = \max_{x \in X_i} q_i (\pi (i \mapsto x)) \]
which is the usual definition of a Nash equilibrium.

\section{Multilinear games}

Now we define a large family of games, called the \emph{multilinear
  games}, that are guaranteed to have a
generalised Nash equilibrium. The structure of the argument is the same
as that in \citep{nash50b}, but given in more generality to deal with
more general quantifiers. This section can be seen as a series of
lemmas that are eventually used to prove theorem
\ref{generalised-nash-theorem} (the generalisation of Nash's theorem)
in the next section.

\begin{definition}[Closed graph property]
Let $X$ and $Y$ be topological spaces and $F \in X \to \ps (Y)$. We say that
$F$ has the \emph{closed graph property} iff
\[ \Gamma(F) = \{ (x, y) \in X \times Y \mid y \in F(x) \} \]
is closed with respect to the product topology.
\end{definition}

The closed graph property is a form of continuity for functions whose
range is a set of subsets of a topological space.

In order to guarantee that a generalised simultaneous game will have
an equilibrium we need to impose closed graph properties on the
quantifiers. However the domain of a quantifier is a function set, which in general
has no unique natural topology. The least we need is that the
unilateral maps are continuous, and so for this reason we define the
\emph{unilateral topology}.

\begin{definition}[Unilateral topology]
For each $i \in I$ let $X_i$ and $R_i$ be topological spaces with $q_i \in
X_i \to R_i$ continuous. The \emph{unilateral topology} on $X_i \to
R_i$ is the final topology with respect to the singleton family $\{
\U_q^i \}$, that is, it is the largest topology with respect to which
$\U_q^i$ is continuous. A function which is continuous with respect to
the unilateral topology
will be called \emph{unilaterally continuous}, and a function which
has closed graph with respect to the unilateral topology has
\emph{unilaterally closed graph}.
\end{definition}

Another possible topology on $X_i \to R_i$ which will be useful is the
topology of pointwise convergence. Most of this paper
could be formulated using only pointwise convergence, except for an
interesting example at the end of this section which needs a finer
topology, namely uniform convergence.

\begin{lemma}
The unilateral topology is finer than the topology of pointwise convergence.
\end{lemma}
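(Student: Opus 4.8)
The plan is to show that every open set in the topology of pointwise convergence is also open in the unilateral topology; equivalently, that the identity map from $(X_i \to R_i)$ with the unilateral topology to $(X_i \to R_i)$ with the topology of pointwise convergence is continuous. By the definition of the unilateral topology as the final topology with respect to $\{\U_q^i\}$, a map $f$ out of $(X_i \to R_i)$ (with the unilateral topology) into any space is continuous if and only if $f \circ \U_q^i$ is continuous. Taking $f$ to be the identity into the pointwise-convergence space, it therefore suffices to prove that $\U_q^i \colon \prod_{j \in I} X_j \to (X_i \to R_i)$ is continuous when the codomain carries the topology of pointwise convergence.

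For this, recall that the topology of pointwise convergence on $X_i \to R_i$ is the subspace topology inherited from the product $\prod_{x \in X_i} R_i$, whose subbasic open sets are $\mathrm{ev}_x^{-1}(V)$ for $x \in X_i$ and $V \subseteq R_i$ open, where $\mathrm{ev}_x(p) = p(x)$. So it is enough to check that $x \mapsto \mathrm{ev}_x \circ \U_q^i$ is continuous for each fixed $x \in X_i$. Unwinding the definitions, $(\mathrm{ev}_x \circ \U_q^i)(\pi) = \U_q^i(\pi)(x) = q_i(\pi(i \mapsto x))$. The map $\pi \mapsto \pi(i \mapsto x)$ from $\prod_{j \in I} X_j$ to itself is continuous, since postcomposing with each coordinate projection $\pi_j$ yields either the continuous projection $\pi \mapsto \pi_j$ (for $j \neq i$) or the constant map with value $x$ (for $j = i$), and continuity into a product is checked coordinatewise. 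Composing with the continuous $q_i$ gives continuity of $\mathrm{ev}_x \circ \U_q^i$, as required.

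Finally, I would note that this argument also shows the inclusion is in general strict: the topology of pointwise convergence is typically not the final topology of $\U_q^i$, since that final topology can contain extra open sets not detected by the evaluation maps (the example of uniform convergence alluded to after the lemma illustrates that finer unilateral-style topologies genuinely arise). The main obstacle here is conceptual rather than computational — one must be careful about the direction of the final-topology universal property (it characterises maps \emph{out of} the space) and about the fact that pointwise convergence is itself a subspace topology of a product, so that its continuity can be verified subbasic-open-set by subbasic-open-set. Once that bookkeeping is set up, the verification is routine.
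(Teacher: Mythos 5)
Your proposal is correct and follows essentially the same route as the paper: both reduce the claim, via the definition of the final topology, to showing that $\U_q^i$ is continuous when its codomain $X_i \to R_i$ carries the topology of pointwise convergence. The only difference is that you verify this continuity on subbasic open sets using the coordinatewise continuity of $\pi \mapsto \pi(i \mapsto x)$, whereas the paper argues with convergent sequences; your version is in fact the more careful one, since for an infinite index set $I$ the product need not be first-countable and sequential continuity alone would not suffice.
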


\begin{proof}
It must be proven that $\U_q^i$ is continuous with respect to the
topology of pointwise convergence. Let $\pi_j \longrightarrow \pi$ be a convergent
sequence in $\prod_{j \in I} X_j$, and let $x \in X_i$. We have
\[ \pi_j (i \mapsto x) \longrightarrow \pi(i \mapsto x) \]
in the product topology, so
\[ \U_q^i (\pi_j)(x) = q_i(\pi_j (i \mapsto x)) \longrightarrow q_i(\pi (i \mapsto x))
= \U_q^i (\pi)(x) \]
because $q_i$ is continuous. Therefore $\U_q^i (\pi_j) \longrightarrow \U_q^i (\pi)$
pointwise, as required.
\end{proof}

Now we can give the definition of a multilinear game. This definition
essentially contains the least assumptions needed for Nash's proof.

\begin{definition}[Multilinear game]
A game $\mathcal G = (I, (X_i, R_i, q_i, \varphi_i)_{i \in I})$ is
called \emph{multilinear} iff
\begin{itemize}
\item Each $X_i$ is a compact and convex subset of a given locally
  compact space $V_i$ over $\mathbb R$;
\item Each $R_i$ is a topological vector space over $\mathbb R$;
\item Each $q_i$ extends to a continuous multilinear map
\[ q_i \in \prod_{j \in I} V_j \to R_i \]
(that is, $q_i$ is linear with respsect to each $V_j$ separately);
\item Each $\varphi_i$ has unilaterally closed graph, $\varphi_i
  (p)$ is closed and convex for all $p \in X_i \to R_i$, and
$\mbox{\emph{dom}} (\varphi_i) \supseteq \mbox{\emph{im}} (\U_q^i)$.
\end{itemize}
\end{definition}

(Note that because $q_i$ is continuous and multilinear, to satisfy the
last condition it suffices that $\varphi_i (p) \neq \varnothing$
whenever $p$ is continuous and linear. Note also that if $\varphi_i$
is single-valued then each $\varphi_i (p)$ is automatically closed and
convex.)

The idea of the existence proof is to reduce to the following fixed
point theorem.

\begin{theorem}[Kakutani-Fan-Glicksberg fixed point theorem
  \citep{fan52, glicksberg52}]\label{kakutani-theorem}
Let $S$ be a nonempty, compact and convex subset of a locally convex
space over $\mathbb{R}$. Let $B \in S \to \mathcal{P} (S)$ have closed
graph and let $B (\pi)$ be nonempty, closed and convex for all $\pi
\in S$. Then $B$ has a fixed point.
\end{theorem}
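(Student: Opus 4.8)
This is the classical Kakutani--Fan--Glicksberg theorem, so the plan is to bootstrap from the finite-dimensional Kakutani fixed point theorem (itself a consequence of Brouwer's theorem via simplicial subdivision) by approximating $S$ from inside by finite-dimensional polytopes.

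First I would exploit compactness of $S$. For each continuous seminorm $p$ on the ambient locally convex space and each $\varepsilon > 0$, choose finitely many points $x_1, \dots, x_n \in S$ whose $p$-balls of radius $\varepsilon$ cover $S$, put $S_{p,\varepsilon} = \mathrm{conv}\{x_1, \dots, x_n\}$, which lies in $S$ by convexity, and build a continuous map $\rho_{p,\varepsilon} \colon S \to S_{p,\varepsilon}$ from a partition of unity subordinate to this cover, so that $p(x - \rho_{p,\varepsilon}(x)) < \varepsilon$ for all $x \in S$ (here $S$ is compact Hausdorff, hence normal, so such partitions of unity exist). Then define $B_{p,\varepsilon} \colon S_{p,\varepsilon} \to \ps(S_{p,\varepsilon})$ by $B_{p,\varepsilon}(x) = \mathrm{conv}(\rho_{p,\varepsilon}(B(x)))$; this has nonempty convex values, and it has compact values and closed graph because $\rho_{p,\varepsilon}$ is continuous, $S$ is compact, $B$ has closed graph, and convex hulls in a finite-dimensional space are controlled by Carath\'eodory's theorem. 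Applying the finite-dimensional Kakutani theorem gives $x_{p,\varepsilon} \in B_{p,\varepsilon}(x_{p,\varepsilon})$; writing that point as a convex combination of elements $\rho_{p,\varepsilon}(z_\ell)$ with $z_\ell \in B(x_{p,\varepsilon})$ and using that $B(x_{p,\varepsilon})$ is already convex, I obtain $w_{p,\varepsilon} \in B(x_{p,\varepsilon})$ with $p(x_{p,\varepsilon} - w_{p,\varepsilon}) < \varepsilon$.

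Finally I would pass to a limit. The continuous seminorms form a directed set under the pointwise order; directing the pairs $(p,\varepsilon)$ by declaring $(p,\varepsilon) \leq (p',\varepsilon')$ when $p \leq p'$ and $\varepsilon \geq \varepsilon'$, and using compactness of $S$, I extract a subnet with $x_{p,\varepsilon} \to x^*$. Because the approximation seminorm dominates any prescribed seminorm cofinally while $\varepsilon \to 0$, the bound $p(x_{p,\varepsilon} - w_{p,\varepsilon}) < \varepsilon$ upgrades to $w_{p,\varepsilon} \to x^*$ in the full locally convex topology. Then $x_{p,\varepsilon} \to x^*$, $w_{p,\varepsilon} \to x^*$, and $w_{p,\varepsilon} \in B(x_{p,\varepsilon})$, so the closed graph property of $B$ yields $x^* \in B(x^*)$.

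I expect the main obstacle to be the bookkeeping in the net argument --- in particular arranging the index set so that the seminorm used in the finite-dimensional approximation dominates any given seminorm cofinally, which is exactly what converts a single-seminorm error estimate into convergence in the ambient topology. A secondary delicate point is verifying that each $B_{p,\varepsilon}$ genuinely inherits the closed graph property, where Carath\'eodory's theorem together with the compactness of $S$ does the work. Since the statement is standard, in the paper itself one would of course simply invoke \citep{fan52, glicksberg52}.
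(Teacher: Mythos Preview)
The paper does not prove this theorem at all: it is stated with citations to \citep{fan52, glicksberg52} and invoked as a black box in the proof of theorem~\ref{multilinear-existence-thm}. You anticipated this in your final sentence, and your sketch is the standard finite-dimensional approximation route (partition-of-unity retraction onto polytopes, finite-dimensional Kakutani, then a net limit via compactness and the closed-graph hypothesis), which is correct in outline. The only place I would tighten the exposition is the subnet step: you need the index map of the subnet to be eventually (not merely cofinally) above any prescribed $(q,\delta)$, which follows once you use a monotone-cofinal (Willard) subnet or equivalently pass to a universal subnet; this is precisely the ``bookkeeping'' you flagged.
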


We will need to use the fact that locally convex spaces are closed
under arbitrary products.

\begin{lemma}\label{locally-convex-products}
Let $\{ V_i \}_{i \in I}$ be a family of locally convex spaces over a field $K$. Then
$\prod_{i \in I} X_i$ has the strucutre of a locally convex space over $K$
whose topology is the product topology.
\end{lemma}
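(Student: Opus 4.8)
The plan is to verify the three standard axioms of a locally convex space for the product topology, reducing everything to the corresponding statements about each factor $V_i$. First I would recall that a topological vector space structure is already known to be stable under arbitrary products: the product topology makes addition and scalar multiplication continuous because these operations are computed coordinatewise, and continuity into a product is equivalent to continuity of each coordinate, each of which holds by assumption on $V_i$. (The $T_1$ assumption is likewise preserved, since a product of $T_1$ spaces is $T_1$.) So the only real content is local convexity, i.e.\ that $\prod_{i \in I} V_i$ has a neighbourhood basis of $0$ consisting of convex sets.

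For this, I would use the characterisation of local convexity by a basis of convex neighbourhoods of the origin. A basic open neighbourhood of $0$ in the product topology has the form $\prod_{i \in I} U_i$ where $U_i$ is a neighbourhood of $0$ in $V_i$ and $U_i = V_i$ for all but finitely many $i$; by local convexity of each $V_i$ we may shrink each of the finitely many nontrivial $U_i$ to a convex neighbourhood of $0$. The key observation is then that a product of convex sets is convex: if $x, y \in \prod_i C_i$ with each $C_i$ convex and $t \in [0,1]$, then $(tx + (1-t)y)_i = t x_i + (1-t) y_i \in C_i$ for each $i$, so $tx + (1-t)y \in \prod_i C_i$. Hence the convex sets $\prod_i C_i$ obtained this way form a neighbourhood basis at $0$, which is exactly local convexity.

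I do not expect any serious obstacle here — the statement is essentially folklore and every step is a routine coordinatewise check. The one point that deserves a sentence of care is making sure the convex neighbourhoods exhibited actually form a \emph{basis} (not merely a family of convex neighbourhoods): this is immediate because every basic product neighbourhood already contains one of them, obtained by replacing its finitely many nontrivial factors by convex subneighbourhoods. A minor remark worth including is that the paper's running convention that topological vector spaces are $T_1$ is respected, so the resulting object is a locally convex space in the sense used elsewhere in the paper.
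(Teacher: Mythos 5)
Your proof is correct, and it is the standard argument: the paper itself states this lemma without any proof, treating it as a known fact about locally convex spaces (deferring to the cited reference on the subject), so your coordinatewise verification --- product of TVS is a TVS, basic product neighbourhoods of $0$ shrink to products of convex neighbourhoods, and products of convex sets are convex --- supplies exactly the routine argument the paper leaves implicit. The only cosmetic point is that the statement's $\prod_{i \in I} X_i$ is evidently a typo for $\prod_{i \in I} V_i$, which you silently and correctly repair.
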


Much of the usefulness of multilinear games comes down to the fact
that their unilateral maps are well-behaved.

\begin{lemma}
Let $\mathcal G$ be a multilinear game with strategy space $S$. Then
each $\U_q^i$ is a continuous function
\[ \U_q^i \in S \times X_i \to R_i \]
(under the Curry bijection $A \to (B \to C) \cong A \times B \to C$)
which is linear in its second argument.
\end{lemma}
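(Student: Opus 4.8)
The plan is to verify the two claimed properties of $\U_q^i$ — joint continuity (under currying) and linearity in the second argument — directly from the defining formula $\U_q^i(\pi)(x) = q_i(\pi(i \mapsto x))$, using the hypotheses on a multilinear game. The key observation is that $\U_q^i$, viewed as a map $S \times X_i \to R_i$, factors through the ``substitution'' map $s_i : S \times X_i \to \prod_{j \in I} X_j$ given by $s_i(\pi, x) = \pi(i \mapsto x)$, followed by $q_i$. Concretely, $\U_q^i = q_i \circ s_i$ as a map $S \times X_i \to R_i$.

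First I would show $s_i$ is continuous. Its codomain $\prod_{j \in I} X_j$ carries the product topology, so it suffices to check that each coordinate $j \in I$ of $s_i$ is continuous as a map $S \times X_i \to X_j$. For $j = i$ this coordinate is the second projection $(\pi, x) \mapsto x$; for $j \neq i$ it is $(\pi, x) \mapsto \pi_j$, the composite of the first projection $S \times X_i \to S$ with the $j$th coordinate projection $S \to X_j$. Both are continuous, so $s_i$ is continuous. Since $q_i$ is continuous on $\prod_{j \in I} V_j \supseteq S$ (indeed it is continuous as a multilinear map and $S = \prod_j X_j$ sits inside $\prod_j V_j$ with the subspace topology agreeing with the product of the subspace topologies), the composite $\U_q^i = q_i \circ s_i : S \times X_i \to R_i$ is continuous.

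For linearity in the second argument, fix $\pi \in S$ and consider the map $x \mapsto \U_q^i(\pi)(x) = q_i(\pi(i \mapsto x))$. Writing $\ell_\pi : V_i \to \prod_{j \in I} V_j$ for the affine map sending $x$ to the tuple with $i$th coordinate $x$ and $j$th coordinate $\pi_j$ for $j \neq i$, we have $\U_q^i(\pi) = q_i \circ \ell_\pi$ on $X_i$. But $\ell_\pi$ differs from a linear map only in the constant contributions on coordinates $j \neq i$, and $q_i$ is linear in the $i$th variable separately; holding the other coordinates fixed at $\pi_j$, the resulting map $V_i \to R_i$ is exactly ``$q_i$ with all but the $i$th slot frozen,'' which is linear by the multilinearity hypothesis. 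Hence $x \mapsto \U_q^i(\pi)(x)$ is the restriction to $X_i$ of a linear map $V_i \to R_i$, which is the desired linearity.

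I do not expect any serious obstacle here; the statement is essentially a bookkeeping consequence of the definition of multilinear game. The only point requiring a little care is the topological one — confirming that the product topology on $S = \prod_{j \in I} X_j$ (as a subspace of $\prod_j V_j$) is the one making the coordinate maps continuous and matching the domain on which $q_i$ is assumed continuous — but this is routine once one notes that the subspace topology on a product of subspaces coincides with the product of the subspace topologies.
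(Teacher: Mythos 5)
Your proof is correct and takes the same route as the paper, which simply asserts the lemma ``by the continuity and multilinearity of the $q_i$''; you have merely spelled out the details (the factorisation through the substitution map $(\pi,x)\mapsto\pi(i\mapsto x)$ and the freezing of the non-$i$th coordinates), all of which check out.
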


\begin{proof}
By the continuity and multilinearity of the $q_i$.
\end{proof}

Lemmas \ref{first-existence-lemma}-\ref{last-existence-lemma} form the
core of the proof, establishing the hypotheses of the
Kakutani-Glicksberg-Fan theorem.

\begin{lemma}\label{first-existence-lemma}
Let $\mathcal{G}$ be a multilinear game. Then the strategy
space of $\mathcal{G}$ is a nonempty, compact and convex subset of a
locally convex space.
\end{lemma}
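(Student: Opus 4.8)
The plan is to realise the strategy space $S = \prod_{i \in I} X_i$ as a subspace of $V = \prod_{i \in I} V_i$ and to check the three required properties coordinatewise. First I would invoke lemma \ref{locally-convex-products}: since each $V_i$ is a locally convex space over $\mathbb{R}$, the product $V$ is again a locally convex space over $\mathbb{R}$, and crucially its topology is the product topology. Hence it suffices to show that $S$, equipped with the subspace topology from $V$, is a nonempty, compact and convex subset of $V$.

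Nonemptiness is immediate: each $X_i$ is a nonempty set by the definition of a game, so $S$ is a product of nonempty sets and is nonempty by the axiom of choice. Convexity follows because convex combinations in $V$ are taken coordinatewise: if $\pi, \sigma \in S$ and $t \in [0,1]$ then $(t\pi + (1-t)\sigma)_i = t\pi_i + (1-t)\sigma_i$, which lies in $X_i$ because each $X_i$ is convex, so $t\pi + (1-t)\sigma \in S$. For compactness I would first note that the subspace topology which $S$ inherits from the product topology on $V$ coincides with the product of the subspace topologies which each $X_i$ inherits from $V_i$; this is the standard commutation of the subspace and product constructions in point-set topology. Since each $X_i$ is compact by hypothesis, Tychonoff's theorem then gives that $S = \prod_{i \in I} X_i$ is compact (Tychonoff's theorem again uses the axiom of choice, which is available since we work in classical set theory).

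There is no genuine obstacle here; the lemma is essentially a bookkeeping step assembling the hypotheses of the Kakutani–Fan–Glicksberg theorem (theorem \ref{kakutani-theorem}). The only point worth stating carefully is the identification of the subspace topology on $S$ with the product of the subspace topologies, as this is exactly what lets Tychonoff's theorem be applied to the family $\{X_i\}_{i \in I}$ directly, and the appeal to lemma \ref{locally-convex-products} for the ambient locally convex structure.
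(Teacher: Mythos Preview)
Your proposal is correct and follows essentially the same route as the paper's own proof: both invoke lemma~\ref{locally-convex-products} to get the ambient locally convex space $\prod_{i \in I} V_i$, then obtain nonemptiness by the axiom of choice, compactness by Tychonoff, and convexity by the coordinatewise computation $(t\pi + (1-t)\sigma)_i = t\pi_i + (1-t)\sigma_i \in X_i$. Your extra remark that the subspace topology on $S$ agrees with the product of the subspace topologies on the $X_i$ is a small point the paper leaves implicit, but it does not change the approach.
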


\begin{proof}
The strategy space is
\[ S = \prod_{i \in I} X_i \subseteq \prod_{i \in I} V_i \]
where the larger space is locally convex by lemma
\ref{locally-convex-products}.

$S$ is nonempty by the axiom of choice and compact by Tychonoff's
theorem. Convexivity is also inherited by the product, since for each
$i \in I$ we have
\[ (tx + (1 - t)y)_i = t x_i + (1 - t) y_i \in X_i \qedhere \]
\end{proof}

\begin{lemma}
Let $\mathcal G$ be a multilinear game with best response
correspondence $B$ such that each quantifier $\varphi_i$ is attained
by a selection function $\varepsilon_i$. Then $B(\pi)$ is nonempty for
all $\pi$.
\end{lemma}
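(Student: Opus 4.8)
The plan is to exhibit an explicit element of $B(\pi)$ built from the given selection functions $\varepsilon_i$. Fix $\pi \in S$. For each $i \in I$, look at the function $\U_q^i(\pi) \in X_i \to R_i$. Since $\mathcal{G}$ is multilinear, the domain condition gives $\mathrm{dom}(\varphi_i) \supseteq \mathrm{im}(\U_q^i)$, so $\U_q^i(\pi) \in \mathrm{dom}(\varphi_i)$. Because $\varepsilon_i$ attains $\varphi_i$, the attainment condition (which is required to hold on $\mathrm{dom}(\varphi_i)$) then yields
\[ \U_q^i(\pi)\bigl(\varepsilon_i(\U_q^i(\pi))\bigr) \in \varphi_i\bigl(\U_q^i(\pi)\bigr), \]
which is exactly the assertion that $\bigl(\U_q^i(\pi), \varepsilon_i(\U_q^i(\pi))\bigr) \in \Delta(\varphi_i)$.

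Next I would define $\sigma \in S$ coordinatewise by $\sigma_i = \varepsilon_i(\U_q^i(\pi)) \in X_i$ for each $i \in I$; this is a well-defined strategy profile since each coordinate is a definite element. By the membership displayed above, $(\U_q^i(\pi), \sigma_i) \in \Delta(\varphi_i)$ for every $i$, so $\sigma \in B_i(\pi)$ for all $i \in I$ and hence $\sigma \in \bigcap_{i \in I} B_i(\pi) = B(\pi)$. Therefore $B(\pi)$ is nonempty, as claimed.

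The proof is essentially an unwinding of the definitions of $B_i$, $\Delta$, $\U_q^i$ and attainment, so there is no serious obstacle; the only point needing care is checking that $\U_q^i(\pi)$ actually lies in $\mathrm{dom}(\varphi_i)$, which is precisely what the last clause in the definition of a multilinear game guarantees (and the parenthetical remark after that definition notes why continuity and multilinearity of $q_i$ make this clause easy to verify in practice). Notably, none of the topological hypotheses of a multilinear game — compactness, convexity, unilaterally closed graph — enter here; those are reserved for establishing the remaining hypotheses of the Kakutani–Fan–Glicksberg theorem in the lemmas that follow.
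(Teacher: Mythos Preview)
Your proof is correct and essentially identical to the paper's own argument: both define $\sigma_i = \varepsilon_i(\U_q^i(\pi))$, invoke the domain condition $\mbox{dom}(\varphi_i) \supseteq \mbox{im}(\U_q^i)$ from the definition of a multilinear game, and apply attainment to conclude $(\U_q^i(\pi),\sigma_i)\in\Delta(\varphi_i)$. Your additional remark that the topological hypotheses play no role in this particular lemma is accurate and worth noting.
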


\begin{proof}
Let $S$ be the strategy space of $\mathcal{G}$. Given $\pi \in S$ we
define $\sigma \in S$ to have $i$th component
\[ \sigma_i = \varepsilon_i (\U_q^i (\pi)) \]
Since $\varepsilon_i$ attains $\varphi_i$ and $\U_q^i (\pi) \in
\mbox{dom} (\varphi_i)$ we have
\[ \U_q^i (\pi) (\varepsilon_i (\U_q^i (\pi))) \in \varphi_i (\U_q^i
(\pi)) \]
Therefore
\[ (\U_q^i (\pi), \sigma_i) = (\U_q^i (\pi), \varepsilon_i (\U_q^i
(\pi))) \in \Delta(\varphi_i) \]
so $\sigma \in B(\pi)$, as required.
\end{proof}

\begin{lemma}
Let $\mathcal G$ be a multilinear game with best response
correspondence $B$. Then $B(\pi)$ is closed for all $\pi$.
\end{lemma}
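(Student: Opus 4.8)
The plan is to reduce to showing that each $B_i(\pi)$ is closed, since $B(\pi) = \bigcap_{i \in I} B_i(\pi)$ and an arbitrary intersection of closed sets is closed. So fix $i \in I$ and a strategy profile $\pi$. I would \emph{not} try to argue that the diagonal $\Delta(\varphi_i)$ is closed and pull it back under the map $\sigma \mapsto (\U_q^i(\pi), \sigma_i)$, because that appears to require continuity of the evaluation map $(p,x) \mapsto p(x)$ with respect to the unilateral topology, and final topologies are not well behaved under products. Instead I would route through the graph $\Gamma(\varphi_i)$, which is closed by the definition of a multilinear game (each $\varphi_i$ has unilaterally closed graph).

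Concretely, define $g \colon S \to (X_i \to R_i) \times R_i$ --- with the unilateral topology on the first factor --- by $g(\sigma) = \bigl( \U_q^i(\pi),\, \U_q^i(\pi)(\sigma_i) \bigr)$. Unwinding the definitions of $B_i$, $\Delta$ and $\Gamma$, one checks that $\sigma \in B_i(\pi)$ iff $\U_q^i(\pi)(\sigma_i) \in \varphi_i(\U_q^i(\pi))$ iff $g(\sigma) \in \Gamma(\varphi_i)$, so $B_i(\pi) = g^{-1}(\Gamma(\varphi_i))$. The remaining work is to verify that $g$ is continuous, which is routine: its first component is the constant map with value $\U_q^i(\pi)$, hence continuous into any topology, in particular the unilateral one; its second component factors as $\sigma \mapsto (\pi, \sigma_i) \mapsto \U_q^i(\pi)(\sigma_i)$, where $\sigma \mapsto (\pi, \sigma_i)$ is continuous $S \to S \times X_i$ (its $j$th coordinate is the constant $\pi_j$ for $j \neq i$ and the projection $\sigma \mapsto \sigma_i$ for $j = i$) and is followed by the map $\U_q^i \in S \times X_i \to R_i$, which is continuous by the preceding lemma. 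Hence $g$ is continuous, $B_i(\pi) = g^{-1}(\Gamma(\varphi_i))$ is closed, and therefore $B(\pi) = \bigcap_{i \in I} B_i(\pi)$ is closed.

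The only genuine obstacle is the first observation: resisting the direct approach via $\Delta(\varphi_i)$ and working with $\Gamma(\varphi_i)$ instead, which is exactly why the unilateral topology and the separate hypothesis of \emph{unilaterally closed graph} (rather than closedness of the diagonal) were set up as they were. Once that is seen, every step is a bookkeeping continuity argument resting on the already-established continuity of $\U_q^i$ as a map $S \times X_i \to R_i$.
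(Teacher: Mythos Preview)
Your argument is correct, but it takes a longer route than the paper and leans on a stronger hypothesis than is needed here. The paper does not go through $\Gamma(\varphi_i)$ or the unilateral topology at all for this lemma. Since $\pi$ is fixed, so is the function $p := \U_q^i(\pi)$, and the condition $\sigma \in B_i(\pi)$ unwinds simply to $p(\sigma_i) \in \varphi_i(p)$. The right-hand side is a \emph{fixed} subset of $R_i$, and it is closed by one of the hypotheses in the definition of a multilinear game (``$\varphi_i(p)$ is closed and convex for all $p$''). The map $\sigma \mapsto p(\sigma_i)$ is continuous because $p = \U_q^i(\pi)$ is continuous $X_i \to R_i$ (via continuity of $q_i$) and $\sigma \mapsto \sigma_i$ is a projection. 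So $B_i(\pi)$ is the preimage of a closed set under a continuous map, and the paper just checks this with a sequence argument.

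In other words, the ``direct approach'' you worried about does not actually require closedness of $\Delta(\varphi_i)$ or continuity of the evaluation map $(p,x)\mapsto p(x)$: because the first coordinate is constant, only continuity of the single fixed function $p$ is needed. Your detour through $\Gamma(\varphi_i)$ works, and has the mild conceptual advantage that it shows the separate ``$\varphi_i(p)$ closed'' hypothesis is not needed for this particular lemma (closed graph already forces closed values). But the paper's proof is shorter and uses the more elementary hypothesis; the unilaterally-closed-graph assumption is only really needed later, in Lemma~\ref{last-existence-lemma}, where $\pi$ genuinely varies.
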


\begin{proof}
It suffices to prove that each factor
\[ B_i (\pi) = \{ \sigma \in S \mid (\U_q^i (\pi), \sigma_i) \in
\Delta(\varphi_i) \} \]
is closed. Let $\sigma_j \longrightarrow \sigma$ be a convergent sequence in $B_i
(\pi)$. For each $i$ we have $\sigma_{j, i} \longrightarrow \sigma_i$,
so
\[ \U_q^i (\pi) (\sigma_{j, i}) \longrightarrow \U_q^i (\pi)
(\sigma_i) \]
by the continuity of $\U_q^i$. We also have that each
\[ \U_q^i (\pi)(\sigma_{j, i}) \in \varphi_i (\U_q^i (\pi)) \]
and the right hand side is closed by definition, therefore
\[ \U_q^i (\pi)(\sigma_i) \in \varphi_i (\U_q^i (\pi)) \]
that is,
\[ (\U_q^i (\pi), \sigma_i) \in \Delta(\varphi_i) \qedhere \]
\end{proof}

\begin{lemma}
Let $\mathcal G$ be a multilinear game with best response
correspondence $B$. Then $B(\pi)$ is convex for all $\pi$.
\end{lemma}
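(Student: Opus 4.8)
The plan is to reduce to showing that each factor $B_i(\pi)$ is convex, since $B(\pi) = \bigcap_{i \in I} B_i(\pi)$ and an arbitrary intersection of convex sets is convex. Fix $i \in I$ and $\pi \in S$. Unpacking the definition, $\sigma \in B_i(\pi)$ precisely when $\U_q^i(\pi)(\sigma_i) \in \varphi_i(\U_q^i(\pi))$, so the correspondence singles out those strategy profiles whose $i$th component is a best response against $\pi$ in the sense of $\varphi_i$.

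Now take $\sigma, \tau \in B_i(\pi)$ and $t \in [0,1]$, and set $\rho = t\sigma + (1-t)\tau$. First I would note that $\rho \in S$: this is exactly the convexity of the strategy space established in lemma \ref{first-existence-lemma} (each coordinate $\rho_j = t\sigma_j + (1-t)\tau_j$ lies in the convex set $X_j$). Next, using the previous lemma, $\U_q^i(\pi)$ is linear in its second argument, so
\[ \U_q^i(\pi)(\rho_i) = \U_q^i(\pi)(t\sigma_i + (1-t)\tau_i) = t\,\U_q^i(\pi)(\sigma_i) + (1-t)\,\U_q^i(\pi)(\tau_i). \]
Both $\U_q^i(\pi)(\sigma_i)$ and $\U_q^i(\pi)(\tau_i)$ belong to $\varphi_i(\U_q^i(\pi))$, which is convex by the definition of a multilinear game; hence the convex combination does too, giving $(\U_q^i(\pi), \rho_i) \in \Delta(\varphi_i)$, i.e.\ $\rho \in B_i(\pi)$.

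There is no real obstacle here: the argument is just the combination of the linearity of the unilateral map in its second argument (from the lemma immediately above), the convexity hypothesis on the values $\varphi_i(p)$ built into the definition of a multilinear game, and the convexity of $S$ from lemma \ref{first-existence-lemma}. The only point requiring a moment's care is to check that the convex combination stays inside $S$ before invoking the membership condition, which is why I would record that step explicitly.
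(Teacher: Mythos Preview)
Your proof is correct and follows essentially the same approach as the paper: both use the linearity of $\U_q^i(\pi)$ in its second argument together with the convexity of $\varphi_i(\U_q^i(\pi))$ to conclude that the convex combination lies in each $B_i(\pi)$. The only cosmetic differences are that you explicitly reduce to the factors $B_i(\pi)$ via the intersection-of-convex-sets observation and record that $\rho \in S$, whereas the paper argues directly for membership in $B(\pi)$ by checking each $i$.
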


\begin{proof}
Suppose $\sigma, \tau \in B(\pi)$ and $t \in [0,1]$. Let $i \in I$. By
definition we have
\[ \U_q^i (\pi)(\sigma_i), \U_q^i (\pi)(\tau_i) \in \varphi_i (\U_q^i
(\pi)) \]
Since the linearity of $\U_q^i$ we have
\[ \U_q^i (\pi) (t \sigma_i + (1 - t) \tau_i) = t\,\U_q^i
(\pi)(\sigma_i) + (1 - t) \U_q^i (\pi)(\tau_i) \]
Since the $\varphi_i (p)$ are convex, we have
\[ \U_q^i (\pi) (t \sigma_i + (1 - t) \tau_i) \in \varphi_i (\U_q^i
(\pi)) \]
that is,
\[ (\U_q^i (\pi), t \sigma_i + (1 - t) \tau_i) \in \Delta
(\varphi_i) \]
Therefore
\[ t \sigma + (1 - t) \tau \in B(\pi) \qedhere \]
\end{proof}

\begin{lemma}\label{last-existence-lemma}
Let $\mathcal G$ be a multilinear game with best response
correspondence $B$. Then $B$ has closed graph.
\end{lemma}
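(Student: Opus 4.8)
The plan is to show that the graph $\Gamma(B)$ is closed in $S \times S$ by writing it as an intersection of preimages of closed sets under continuous maps. Since $B(\pi) = \bigcap_{i \in I} B_i(\pi)$, unwinding definitions gives $\Gamma(B) = \bigcap_{i \in I} \Gamma(B_i)$, and an arbitrary intersection of closed sets is closed, so it suffices to prove that each $\Gamma(B_i)$ is closed.

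Fix $i \in I$. Unwinding the definitions of $B_i$, $\Delta(\varphi_i)$ and the graph of a quantifier, one obtains
\[ \Gamma(B_i) = \{ (\pi, \sigma) \in S \times S \mid \U_q^i(\pi)(\sigma_i) \in \varphi_i(\U_q^i(\pi)) \} = h_i^{-1}(\Gamma(\varphi_i)) \]
where $h_i \in S \times S \to (X_i \to R_i) \times R_i$ is defined by $h_i(\pi, \sigma) = (\U_q^i(\pi),\, \U_q^i(\pi)(\sigma_i))$, the space $X_i \to R_i$ carrying the unilateral topology. By the multilinear hypothesis each $\varphi_i$ has unilaterally closed graph, so $\Gamma(\varphi_i)$ is closed; it therefore remains only to check that $h_i$ is continuous, for then $\Gamma(B_i)$ is the preimage of a closed set under a continuous map.

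Continuity of $h_i$ is checked componentwise via the universal property of the product topology. The first component, $(\pi, \sigma) \mapsto \U_q^i(\pi)$, is $\U_q^i$ composed with the projection $S \times S \to S$, and $\U_q^i \in S \to (X_i \to R_i)$ is continuous into the unilateral topology essentially by definition, since that topology is the finest one making $\U_q^i$ continuous. The second component, $(\pi, \sigma) \mapsto \U_q^i(\pi)(\sigma_i)$, factors as $(\pi, \sigma) \mapsto (\pi, \sigma_i) \mapsto \U_q^i(\pi, \sigma_i)$, where the first map is built from coordinate projections and the second is the curried form of $\U_q^i$, continuous on $S \times X_i$ by the preceding lemma. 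Hence $h_i$ is continuous, each $\Gamma(B_i)$ is closed, and so $\Gamma(B) = \bigcap_{i \in I} \Gamma(B_i)$ is closed.

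I do not expect a real obstacle, as the unilateral topology was introduced precisely to make $\U_q^i$ continuous, so the lemma is almost formal. The one point requiring care is that ``$\varphi_i$ has unilaterally closed graph'' refers to $\Gamma(\varphi_i) \subseteq (X_i \to R_i) \times R_i$ rather than to $\Delta(\varphi_i) \subseteq (X_i \to R_i) \times X_i$, so one must route $\Gamma(B_i)$ through the evaluation coordinate; the tempting desire for the bare evaluation map $(X_i \to R_i) \times X_i \to R_i$ to be unilaterally continuous (which we do not have) is sidestepped by using instead the already-established continuity of $\U_q^i$ as a map $S \times X_i \to R_i$. A sequential argument along the lines of the previous lemmas would also work, but the preimage formulation avoids any appeal to first countability.
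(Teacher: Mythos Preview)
Your proof is correct and follows essentially the same strategy as the paper's: decompose $\Gamma(B)$ as an intersection over $i$, identify each factor as the preimage of $\Gamma(\varphi_i)$ under the map $(\pi,\sigma)\mapsto(\U_q^i(\pi),\,\U_q^i(\pi)(\sigma_i))$, and appeal to the unilateral continuity of $\U_q^i$ for the first component and to Lemma~3 for the second. The only difference is packaging: the paper runs a sequential-closure argument, whereas you phrase it as the continuous preimage of a closed set, which (as you observe) avoids any implicit appeal to first countability.
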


\begin{proof}
Note that
\[ \Gamma(B) = \bigcap_{i \in I} \{ (\sigma, \pi) \in S^2 \mid (\U_q^i
(\pi), \U_q^i (\pi)(\sigma_i)) \in \Gamma (\varphi_i) \} \]
and so it suffices to prove these factors closed. Let $(\sigma_j,
\pi_j) \longrightarrow (\sigma, \pi)$ be a convergent sequence in the $i$th
factor. By the continuity of $\U_q^i$,
\[ \U_q^i (\pi_j) (\sigma_{j, i}) \longrightarrow \U_q^i (\pi)
(\sigma_i) \]

Since $\U_q^i$ is also unilaterally continuous as a map $S \to (X_i
\to R_i)$, we have $\U_q^i (\pi_j) \longrightarrow
\U_q^i (\pi)$ unilaterally. Therefore we have a convergent sequence 
\[ (\U_q^i (\pi_j), \U_q^i (\pi_j)(\sigma_{j, i})) \longrightarrow (\U_q^i (\pi),
\U_q^i (\pi)(\sigma_i)) \]
in the graph $\Gamma(\varphi_i)$, which is closed by definition.
\end{proof}

\begin{theorem}[Existence theorem for multilinear
  games]\label{multilinear-existence-thm}
Let $\mathcal{G}$ be a multilinear game such that each quantifier is
attained by a selection function. Then $\mathcal{G}$ has a generalised
Nash equilibrium.
\end{theorem}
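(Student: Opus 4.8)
The plan is to obtain the equilibrium as a fixed point of the best response correspondence $B \in S \to \ps(S)$ by invoking the Kakutani-Fan-Glicksberg fixed point theorem (theorem \ref{kakutani-theorem}). By the definition of generalised Nash equilibrium, a $\pi \in S$ with $\pi \in B(\pi)$ is exactly what is required, so the whole proof reduces to checking that $B$ satisfies the hypotheses of theorem \ref{kakutani-theorem} — and these are precisely what lemmas \ref{first-existence-lemma}--\ref{last-existence-lemma} have been arranged to supply.

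Concretely, first I would cite lemma \ref{first-existence-lemma} to get that the strategy space $S = \prod_{i \in I} X_i$ is a nonempty, compact and convex subset of a locally convex space over $\mathbb{R}$, the ambient space being $\prod_{i \in I} V_i$, which is locally convex by lemma \ref{locally-convex-products}. Next I would assemble the pointwise properties of $B$: the lemma immediately following \ref{first-existence-lemma} gives $B(\pi) \neq \varnothing$ for every $\pi$ — this is the single place where the hypothesis that each $\varphi_i$ is attained enters, via the witness $\sigma$ with $\sigma_i = \varepsilon_i(\U_q^i(\pi))$, using that $\U_q^i(\pi) \in \mbox{dom}(\varphi_i)$ by the domain condition in the definition of multilinear game — while the next two lemmas give that $B(\pi)$ is closed and convex. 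Finally lemma \ref{last-existence-lemma} gives that $B$ has closed graph with respect to the product topology on $S \times S$. With all four hypotheses verified, theorem \ref{kakutani-theorem} yields a $\pi \in S$ with $\pi \in B(\pi)$, which is by definition a generalised Nash equilibrium of $\mathcal{G}$.

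At this stage there is really no remaining obstacle: the theorem is essentially a repackaging of the preceding lemmas, so the only thing that requires care is bookkeeping. In particular one must make sure the attainment hypothesis is threaded through correctly — it is used only in the non-emptiness step — and that the topology on each $X_i \to R_i$ relative to which $\varphi_i$ is assumed to have closed graph is the unilateral topology, so that the unilateral continuity of $\U_q^i$ as a map $S \to (X_i \to R_i)$, rather than merely its pointwise continuity, is the property being exploited when closing the graph of $B$ in lemma \ref{last-existence-lemma}. Provided those points are handled, the proof is a one-line appeal to theorem \ref{kakutani-theorem}.
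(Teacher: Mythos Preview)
Your proposal is correct and follows exactly the paper's own argument: assemble the hypotheses of the Kakutani--Fan--Glicksberg theorem from lemmas \ref{first-existence-lemma}--\ref{last-existence-lemma} and conclude that $B$ has a fixed point. The paper's proof is in fact the one-line version of what you wrote, and your remarks about where attainment enters and which topology is in play are accurate elaborations of that same route.
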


\begin{proof}
Let $B$ be the best response correspondence of $\mathcal{G}$. By
lemmas \ref{first-existence-lemma}-\ref{last-existence-lemma} and the
Kakutani-Fan-Glicksberg fixed point theorem, $B$ a fixed point.
\end{proof}

Examples of multilinear games as mixed extensions of finite games are
given in the next section. Another interesting example is given by
integration. Let $X_i = [0, 1]$, $V_i = \mathbb R$ and $R_i = \mathbb R$, and let
$L(X_i)$ be the set of all Lebesgue-integrable functions $p \in [0, 1] \to
\mathbb R$ with
\[ \left| \int_0^1 p(x)\,dx \right| < \infty \]
Define a single-valued quantifier $\varphi_i \in S_{\mathbb R} X_i$ by
\[ \varphi_i (p) = \left. \int_0^1 p(x)\,dx \right|_{p \in L(X_i)} \]
Using the mean value theorem (and the axiom of choice) we can prove
the existence of a selection function attaining $\varphi_i$: for all
$p \in L(X_i)$ there exists $\varepsilon_i (p) \in X_i$ such that
\[ p(\varepsilon_i (p)) = \int_0^1 p(x)\,dx \]
This highly nonconstructive selection function was briefly introduced
as an example in \citep{escardo10a}.

We let $I$ be finite and for simplicity let the other $X_j$ be normed, so the strategy
space is normed and we can work with the $\varepsilon-\delta$
definitions of uniform convergence and continuity.

\begin{lemma}
If $q_i$ is uniformly continuous and $\pi_j \longrightarrow \pi$ then $\U_q^i
(\pi_j) \longrightarrow \U_q^i(\pi)$ uniformly.
\end{lemma}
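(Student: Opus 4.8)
The plan is to exploit the fact that a unilateral change in the $i$th coordinate leaves the distance between two strategy profiles in all other coordinates unchanged, so that uniform continuity of $q_i$ can be applied with a modulus that does not depend on the point $x \in X_i$ at which the unilateral maps are evaluated.

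First I would fix a norm on the finite product $S = \prod_{j \in I} X_j$, say $\|\sigma\| = \max_{j \in I} \|\sigma_j\|$; since $I$ is finite, any two product norms are equivalent, so the choice is immaterial. The key observation is that for any $x \in X_i$ the profiles $\pi_j(i \mapsto x)$ and $\pi(i \mapsto x)$ have the same $i$th coordinate, namely $x$, whence
\[ \|\pi_j(i \mapsto x) - \pi(i \mapsto x)\| = \max_{k \neq i} \|\pi_{j,k} - \pi_k\| \leq \|\pi_j - \pi\|, \]
and in particular this bound does not involve $x$ at all.

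Now let $\epsilon > 0$. Applying uniform continuity of $q_i$ (on a set containing all points of the form $\pi'(i \mapsto x)$ with $\pi' \in S$, $x \in X_i$ — e.g.\ on $S$ itself), choose $\delta > 0$ such that $\|a - b\| < \delta$ implies $\|q_i(a) - q_i(b)\| < \epsilon$. Since $\pi_j \longrightarrow \pi$ there is $N$ with $\|\pi_j - \pi\| < \delta$ for all $j \geq N$, and then for every such $j$ and every $x \in X_i$ the estimate above gives $\|\pi_j(i \mapsto x) - \pi(i \mapsto x)\| < \delta$, hence
\[ \|\U_q^i(\pi_j)(x) - \U_q^i(\pi)(x)\| = \|q_i(\pi_j(i \mapsto x)) - q_i(\pi(i \mapsto x))\| < \epsilon. \]
Taking the supremum over $x \in X_i$ yields $\sup_{x \in X_i}\|\U_q^i(\pi_j)(x) - \U_q^i(\pi)(x)\| \leq \epsilon$ for all $j \geq N$, which is precisely uniform convergence $\U_q^i(\pi_j) \longrightarrow \U_q^i(\pi)$.

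There is essentially no serious obstacle; the only point requiring care is that the argument genuinely uses \emph{uniform} continuity rather than continuity, since otherwise the modulus $\delta$ could depend on $x$ and one would recover only the pointwise convergence already established earlier. A secondary bookkeeping point is to make sure the domain on which $q_i$ is uniformly continuous contains all the relevant unilateral perturbations, which is automatic here because $\pi_j, \pi \in S$ and $S$ is closed under the operation $\pi' \mapsto \pi'(i \mapsto x)$ for $x \in X_i$.
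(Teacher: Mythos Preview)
Your proof is correct and follows essentially the same approach as the paper: both hinge on the observation that $\pi_j(i\mapsto x)$ and $\pi(i\mapsto x)$ agree in the $i$th coordinate, so their distance is bounded by $\|\pi_j-\pi\|$ independently of $x$, after which uniform continuity of $q_i$ gives a single $\delta$ that works for all $x$. Your version is in fact slightly more explicit than the paper's (spelling out the max-norm and the domain containment), but the structure is identical.
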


\begin{proof}
We have that $q_i$ is uniformly continuous, that is,
\begin{equation}
\forall \varepsilon > 0\ \exists \delta > 0\ \forall \pi, \sigma \in
\prod_{j \in I} X_j .\ |\pi - \sigma| < \delta \implies |q_i(\pi) -
q_i(\sigma)| < \varepsilon
\end{equation}
We also have $\pi_j \longrightarrow \pi$, that is,
\begin{equation}
\forall \varepsilon > 0\ \exists N\ \forall j \geq N .\ |\pi_j -
\pi| < \varepsilon
\end{equation}
We want to prove that $\U_q^i (\pi_j) \longrightarrow \U_q^i (\pi)$ uniformly,
that is,
\[ \forall \varepsilon > 0\ \exists N\ \forall j \geq N\ \forall x \in X_i .\
|\U_q^i(\pi_j)(x) - \U_q^i(\pi)(x)| < \varepsilon \]

Let $\varepsilon > 0$. By (1), we have $\delta > 0$ with the given
property. We take $\varepsilon$ in (2) to be this $\delta$, obtaining
$N$. Let $j \geq N$, therefore
\[ |\pi_j - \pi| < \delta \]
by (2). Let $x \in X$. The crucial observation is that $\pi_j(i
\mapsto x)$ behaves like $\pi_j$ but is constant in its $i$th
coordinate. That is, we have
\[ |\pi_j (i \mapsto x) - \pi(i \mapsto x)| \leq |\pi_j - \pi| <
\delta \]

Now we take $\pi$, $\sigma$ in (1) to be $\pi_j (i \mapsto x)$ and
$\pi (i \mapsto x)$. We have already proved the antecedent in (1),
therefore
\[ |\U_q^i (\pi_j)(x) - \U_q^i (\pi)(x)| = |q_i(\pi_j (i \mapsto x)) -
q_i(\pi(i \mapsto x))| < \varepsilon \]
as required.
\end{proof}

We have proven that the unilateral topology is finer than the topology
of uniform convergence.

\begin{lemma}
$\varphi_i$ is unilaterally continuous.
\end{lemma}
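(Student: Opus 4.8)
The plan is to exploit that $\varphi_i$ is single-valued, so that ``unilaterally continuous'' means exactly that the partial function $p \mapsto \int_0^1 p(x)\,dx$, with domain $\mbox{dom}(\varphi_i) = L(X_i)$, is continuous when $X_i \to R_i$ is given the unilateral topology. Since the preceding lemma establishes that the unilateral topology is finer than the topology of uniform convergence --- its hypothesis, that $q_i$ is uniformly continuous, holding here because $I$ is finite and $q_i$ is a continuous map on the compact normed strategy space $S = \prod_{j \in I} X_j$ --- and since continuity of a function against a coarser topology on its domain entails continuity against any finer one, it suffices to prove that $\varphi_i$ is continuous with respect to the topology of uniform convergence.

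Thus I would take a sequence $p_j \longrightarrow p$ converging uniformly with each $p_j \in L(X_i)$. First one checks $p \in L(X_i)$: it is measurable as a pointwise limit of measurable functions, and once $j$ is large enough that $\sup_{x \in [0,1]} |p_j(x) - p(x)| \leq 1$ one has $|p| \leq |p_j| + 1$, whence $\int_0^1 |p(x)|\,dx \leq \int_0^1 |p_j(x)|\,dx + 1 < \infty$. Then the estimate
\[ \left| \int_0^1 p_j(x)\,dx - \int_0^1 p(x)\,dx \right| \leq \int_0^1 |p_j(x) - p(x)|\,dx \leq \sup_{x \in [0,1]} |p_j(x) - p(x)| \longrightarrow 0 \]
shows that the unique element of $\varphi_i(p_j)$ converges to the unique element of $\varphi_i(p)$; this is precisely continuity of $\varphi_i$ with respect to uniform convergence, and hence, by the reduction above, unilateral continuity.

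The only part requiring care is the topological bookkeeping of the first step: one must get the direction of the comparison right (continuity against the coarser uniform-convergence topology transfers upward to the finer unilateral topology, not conversely), and one must note that the preceding lemma does apply here because compactness of $S$ together with finiteness of $I$ upgrades continuity of $q_i$ to uniform continuity. The analytic content --- closure of $L(X_i)$ under uniform limits and the chain $|\int p_j - \int p| \leq \int |p_j - p| \leq \sup |p_j - p|$ --- is routine.
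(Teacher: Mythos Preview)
Your proof is correct and follows essentially the same route as the paper: reduce unilateral continuity to continuity under uniform convergence via the preceding lemma, then pass to the limit under the integral. You are simply more careful than the paper in verifying that the preceding lemma applies (justifying uniform continuity of $q_i$ via compactness of $S$), in checking that the uniform limit $p$ lies in $L(X_i)$, and in writing out the elementary estimate where the paper merely invokes ``the uniform convergence theorem.''
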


\begin{proof}
Suppose we have $p_j \longrightarrow p$ uniformly in $L(X_i)$.
Since the convergence of the integrands is uniform, we can apply the
uniform convergence theorem to get
\[ \varphi_i (p_j) = \int_0^1 p_j (x)\,dx \longrightarrow \int_0^1
p(x)\,dx = \varphi_i (p) \]
Since the unilateral topology is finer than the topology of uniform
convergence, we are done.
\end{proof}

In the 1-player game defined by the integration quantifier with
outcome function $q$, the unique value of $q(x)$ when $x$ is an
equilibrium strategy, which can be called the \emph{expected outcome}
of the game, is simply
\[ \int_0^1 q(x)\,dx \]
In the 2-player game with both quantifiers integrals, a generalised
Nash equilibrium $(a, b)$ satisfies
\[ a = \int_0^1 q_X (x, b)\,dx \qquad b = \int_0^1 q_Y (a, y)\,dy \]
Since $\int_0^1 p(x)\,dx$ is the \emph{average} value of $p$, this is
a game where players are trying to gain the average outcome rather
than the maximum.
The existence of a Nash equilibrium in this case can be more directly
proven by applying the Brouwer fixed point theorem to the mapping
\[ [0, 1]^2 \to [0, 1]^2,\ (a, b) \mapsto \left( \int_0^1 q_X (x,
  b)\,dx, \int_0^1 q_Y (a, y)\,dy \right) \]

\section{Finite games}

In this section we apply the existence theorem for multilinear games
to prove a suitable generalisation of Nash's theorem. The classical
version of Nash's theorem guarantees that every finite game (that is,
a classical game in which each player has finitely many strategies)
has a \emph{mixed strategy Nash equilibrium}.

The notion of mixed strategies means that we consider probably
distributions over ordinary strategies (referred to as \emph{pure
  strategies} for clarity). The outcome functions also need to be
replaced by \emph{expected outcome} functions. However the discussion
of probability distributions can be avoided by treating them as
geometric objects, namely \emph{simplices}. This approach also makes
it clearer how quantifiers and selection functions must be modified
when passing to mixed strategies. A probabilistic interpretation of
the resulting theorem is possible, but is avoided in this paper.

\begin{definition}[Finite game]
A game $\mathcal G = (I, (X_i, R_i, q_i, \varphi_i)_{i \in I})$ is
called \emph{finite} iff
\begin{itemize}
\item $I$ is finite;
\item Each $X_i$ is finite;
\item Each $R_i$ is a topological vector space over $\mathbb R$;
\item Each $\varphi_i$ is total, has closed graph with respect to the topology
  of pointwise convergence, and $\varphi_i (p)$ is closed and convex
  for all $p \in X_i \to R_i$.
\end{itemize}
\end{definition}

Note that restricting to pointwise convergence is no loss of
generality here because the $X_i$ are finite.

The set of probability distributions over a finite set can be seen as
a geometric object called a \emph{standard simplex}. In 2 and 3
dimensions these can be easily visualised as a line segment and an
equilateral triangle; the next simplex $\Delta_4$ is a tetrahedron
seen as a subset of $\mathbb R^4$.

\begin{definition}[Standard simplex]
The $n$th \emph{standard simplex} is the set
\[ \Delta_n = \{ (x_1, \ldots, x_n) \in \mathbb{R}^n \mid \sum_{i
  =1}^n x_i = 1 \hbox{ and each } x_i \geq 0 \} \]
\end{definition}

\begin{definition}[Mixed extension]
Let $\mathcal G = (I, (X_i, R_i, q_i, \varphi_i)_{i \in I})$ be a
finite game with strategy space $S$. We define a game
\[ \mathcal G^* = (I, (X_i^*, R_i, q_i^*, \varphi_i^*)_{i \in I}) \]
called the \emph{mixed extension} of $\mathcal G$ as follows: player
$i$ has move set
\[ X_i^* = \Delta_{|X_i|} \]
outcome function
\[ q_i^* (\pi) = \sum_{\sigma \in S} \left( \prod_{i \in I} \pi_{i,
    \sigma_i} \right) (q_i (\sigma)) \]
and quantifier
\[ \varphi_i^* (p) = \varphi_i (p \circ \delta_i) \]
where $\delta_i$ is the canonical
injection $X_i \hookrightarrow X_i^*$ mapping each $j$ to the
vertex of the simplex at which the $j$th coordinate is $1$.
\end{definition}

Note that the finiteness of $I$ is used only in the well-definition of
the $q_i^*$: for the strategy space $S$ to be finite
it is necessary that $I$ be finite, except in trivial cases when all
but finitely many $X_i$ are singletons.

\begin{definition}[Mixed strategy abstract Nash equilibrium]
Let $\mathcal G$ be a finite game. A strategy profile for $\mathcal
G^*$ will be called a \emph{mixed strategy profile} for $\mathcal
G$. An abstract Nash equilibrium of $\mathcal G^*$ will be called a
\emph{mixed strategy abstract Nash equilibrium} of $\mathcal G$.
\end{definition}

The most important property of mixed extensions is that they are
always multilinear. This will used to prove the generalised Nash
theorem.

\begin{lemma}\label{mixed-extensions-multilinear}
Let $\mathcal G$ be a finite game. Then $\mathcal G^*$ is a
multilinear game.
\end{lemma}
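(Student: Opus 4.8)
The plan is to verify the four clauses in the definition of a multilinear game for $\mathcal G^* = (I, (X_i^*, R_i, q_i^*, \varphi_i^*)_{i\in I})$. Two are immediate. Each move set $X_i^* = \Delta_{|X_i|}$ is by inspection a compact convex subset of $V_i := \mathbb R^{|X_i|}$, which is finite-dimensional and hence locally compact (indeed locally convex). And each $R_i$ is a topological vector space over $\mathbb R$ because this is already part of the definition of a finite game. So the work is in the conditions on the $q_i^*$ and on the $\varphi_i^*$.

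For the outcome functions I would note that the defining formula
\[ q_i^*(\pi) = \sum_{\sigma\in S}\Bigl(\prod_{j\in I}\pi_{j,\sigma_j}\Bigr) q_i(\sigma) \]
already makes sense for arbitrary $\pi\in\prod_{j\in I}V_j$, since each $\pi_{j,\sigma_j}$ is one of the linear coordinate functionals on $V_j = \mathbb R^{|X_j|}$; this is the required extension of $q_i^*$. Each summand is a product of exactly one linear functional per player scaling the constant vector $q_i(\sigma)\in R_i$, hence multilinear, and a finite sum of multilinear maps is multilinear --- finiteness of both $I$ and $S$ is used here. Continuity is no harder: $\pi\mapsto\prod_{j\in I}\pi_{j,\sigma_j}$ is a continuous real-valued function on the finite product $\prod_j V_j$, scalar multiplication $\mathbb R\times R_i\to R_i$ and addition on $R_i$ are continuous by the topological-vector-space axioms, and the sum over $\sigma\in S$ is finite.

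For the quantifiers, recall $\varphi_i^*(p) = \varphi_i(p\circ\delta_i)$. Totality of $\varphi_i$ gives $\varphi_i^*(p)\neq\varnothing$ for every $p\in X_i^*\to R_i$, so $\varphi_i^*$ is total and a fortiori $\mathrm{dom}(\varphi_i^*)\supseteq\mathrm{im}(\U_{q^*}^i)$; and $\varphi_i^*(p) = \varphi_i(p\circ\delta_i)$ is closed and convex since $\varphi_i$ has closed convex values. The one substantial point is that $\varphi_i^*$ has unilaterally closed graph. I would introduce the restriction-to-vertices map $r\colon (X_i^*\to R_i)\to(X_i\to R_i)$, $r(p) = p\circ\delta_i$, observe the identity $\Gamma(\varphi_i^*) = (r\times\mathrm{id}_{R_i})^{-1}(\Gamma(\varphi_i))$, and recall that $\Gamma(\varphi_i)$ is closed for the topology of pointwise convergence on $X_i\to R_i$ (a hypothesis on finite games). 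It therefore suffices to show $r$ is continuous from the unilateral topology on $X_i^*\to R_i$ to the pointwise topology on $X_i\to R_i$. Since the unilateral topology is, by definition, the final topology with respect to $\U_{q^*}^i\colon S^*\to(X_i^*\to R_i)$ (where $S^* = \prod_j X_j^*$), the universal property of final topologies reduces this to the continuity of $r\circ\U_{q^*}^i\colon S^*\to(X_i\to R_i)$ for pointwise convergence; and since $X_i$ is finite, that in turn amounts to continuity of $\pi\mapsto q_i^*(\pi(i\mapsto\delta_i(x)))$ for each fixed $x\in X_i$, which follows from the continuity of $q_i^*$ proved above together with the obvious continuity of $\pi\mapsto\pi(i\mapsto\delta_i(x))$.

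The main obstacle --- in fact the only place thought is required --- is this last step, matching up the unilateral topology on the larger space $X_i^*\to R_i$ with pointwise convergence on the smaller space $X_i\to R_i$; routing it through the universal property of the final topology keeps it clean, although one could equally argue with sequences, the target being a finite product of copies of $R_i$. The rest is bookkeeping against the definitions.
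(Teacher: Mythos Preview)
Your proof is correct and follows the same verification-by-clauses structure as the paper. The one place you diverge is in establishing the unilaterally closed graph of $\varphi_i^*$: the paper takes a convergent sequence $(p_j,y_j)\to(p,y)$ in $\Gamma(\varphi_i^*)$, invokes Lemma~1 (the unilateral topology is finer than pointwise) to get $p_j\to p$ pointwise and hence $p_j\circ\delta_i\to p\circ\delta_i$ pointwise, and then appeals to the closed graph of $\varphi_i$; you instead write $\Gamma(\varphi_i^*)=(r\times\mathrm{id}_{R_i})^{-1}\Gamma(\varphi_i)$ and prove $r$ continuous via the universal property of the final topology, reducing to continuity of $\pi\mapsto q_i^*(\pi(i\mapsto\delta_i(x)))$ on $S^*$. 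Your route is marginally cleaner in that it works directly with topologies rather than sequences (so there is no latent worry about sequential versus topological closedness in a space that need not be first-countable), and it does not rely on Lemma~1; but the underlying mechanism---restriction to the finitely many vertices $\delta_i(x)$---is the same in both arguments.
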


\begin{proof}
Each $\Delta_n$ for $n > 0$ is a nonempty, compact and convex subset
of the locally convex space $\mathbb R^n$. Continuity of the $q_i^*$
is clear. $q_i^*$ is multilinear because
\begin{align*}
&\U_{q^*}^i (\pi)(cx + y) \\
=\ &q^*(\pi(i \mapsto cx + y)) \\
=\ &\sum_{\sigma \in S} \left( \left( \prod_{j \neq i} \pi_{j,
      \sigma_j} \right) \cdot (cx_{\sigma_i} + y_{\sigma_i}) \right)
(q (\sigma)) \\
=\ &c \sum_{\sigma \in S} \left( \left( \prod_{j \neq i} \pi_{j,
      \sigma_j} \right) \cdot x_{\sigma_i} \right) (q (\sigma)) +
  \sum_{\sigma \in S} \left( \left( \prod_{j \neq i} \pi_{j, \sigma_j}
      \right) \cdot y_{\sigma_i} \right) (q (\sigma)) \\
=\ &cq^*(\pi(i \mapsto x)) + q^*(\pi(i \mapsto y)) \\
=\ &c\,\U_{q^*}^i (\pi)(x) + \U_{q^*}^i (\pi)(y)
\end{align*}

The $\varphi_i^* (p)$ are of the form $\varphi_i (p')$, and so are
closed and convex. We note that $\varphi_i^*$ is total because
$\varphi_i$ is. The graph of $\varphi_i^*$ is
\[ \Gamma (\varphi_i^*) = \{ (p, y) \in (X_i^* \to R_i) \times R_i \mid
(p \circ \delta_i, y) \in \Gamma (\varphi_i) \} \]
Suppose we have a convergent sequence $(p_j, y_j) \longrightarrow (p, y)$ in
$\Gamma (\varphi_i^*)$. Let $x \in X$, then
\[ (p_j \circ \delta_i)(x) = p_j (\delta_i (x)) \longrightarrow p (\delta_i (x)) = (p
\circ \delta_i)(x) \]
Therefore $p_j \circ \delta_i \longrightarrow p \circ \delta_i$ pointwise, so
\[ (p_j \circ \delta_i, y_j) \longrightarrow (p \circ \delta_i, y) \]
with respect to the topology of pointwise convergence. Since the
unilateral topology is finer than the topology of pointwise convergence, we
are done.
\end{proof}

\ignore{
Lemmas \ref{mixed-extensions-preserve-outcomes} and
\ref{mixed-extensions-preserve-equilibria} are desirable properties of
a mixed extension, showing that viewing a pure strategy as a
degenerate mixed strategy does not change its behaviour. They are
unused, but give confidence that the definition of mixed extension is
the right one.

\begin{lemma}\label{mixed-extensions-preserve-outcomes}
Let $\mathcal{G}$ be a finite game with strategy space $S$, and let
$\pi \in S$. Define $\pi^*$ by $\pi^*_i = \iota_i (\pi_i)$. Then $q (\pi)
= q^* (\pi^*)$.
\end{lemma}

\begin{proof}
It suffices to prove that
\[ \prod_{i = 1}^n \pi^*_{j, \sigma_j} = \begin{cases}
1 &\hbox{ if } \pi = \sigma \\
0 &\hbox{ otherwise}
\end{cases} \]
for all $\sigma \in S$. Note that by definition of the injections $\iota_j$,
\[ (\iota_j (\pi_j))_{\sigma_j} = \begin{cases}
1 &\hbox{ if } \pi_j = \sigma_j \\
0 &\hbox{ otherwise}
\end{cases} \]
Thus we have
\begin{align*}
\prod_{i = 1}^n \pi^*_{j, \sigma_j} 
=\ &\prod_{i = 1}^n (\iota_j (\pi_j))_{\sigma_j} \\
=\ &\begin{cases}
1 &\hbox{ if } \pi_j = \sigma_j \hbox{ for all } 1 \leq j \leq n \\
0 &\hbox{ otherwise}
\end{cases} \\
=\ &\begin{cases}
1 &\hbox{ if } \pi = \sigma \\
0 &\hbox{ otherwise}
\end{cases}
\end{align*}
as required.
\end{proof}

\begin{lemma}\label{mixed-extensions-preserve-equilibria}
Let $\mathcal{G}$ be a finite game with strategy space $S$. Then $\pi
\in S$ is an abstract Nash equilibrium of $\mathcal{G}$ iff $\pi^*$ is
a mixed strategy abstract Nash equilibrium of $\mathcal{G}$.
\end{lemma}

\begin{proof}[Proof]
By the previous lemma, it suffices to prove that
\[ \varphi_i (\U_q^i (\pi)) = \varphi_i^* (\U_{q^*}^i (\pi^*)) \]
for each $i$. Indeed we have
\begin{align*}
\varphi_i^* (\U_{q^*}^i (\pi^*)) =\ &\varphi_i (\U_{q^*}^i (\pi^*)
\circ \iota_i) &&\mbox{(definition of $\varphi_i^*$)} \\
=\ &\varphi_i (\lambda x^{X_i} . q_i^* (\pi^* (i \mapsto \iota_i x)))
&&\mbox{(definition of $\U_{q^*}^i$)} \\
=\ &\varphi_i (\lambda x^{X_i} . q_i^* ((\pi (i \mapsto x))^*))
&&\mbox{(definition of $\pi^*$)} \\
=\ &\varphi_i (\lambda x^{X_i} . q_i (\pi(i \mapsto x)))
&&\mbox{(lemma \ref{mixed-extensions-preserve-outcomes})} \\
=\ &\varphi_i (\U_q^i (\pi)) &&\mbox{(definition of $\U_q^*$)}
\qedhere
\end{align*}
\end{proof}
}

The final result we need is the ability to lift selection functions to
mixed extensions.

\begin{lemma}\label{mixed-selection-functions}
Let $X$ be a nonempty finite set and let $\varphi \in S_{\mathbb R} X$
be a total quantifier attained by the selection function $\varepsilon
\in J_{\mathbb R} X$. Then there exists a selection function $\varepsilon^*$ such
that $\varphi^*$ is attained by $\varepsilon^*$.
\end{lemma}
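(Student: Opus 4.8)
The plan is to exploit the fact that, by definition of the mixed extension, $\varphi^*$ is nothing but $\varphi$ precomposed with the canonical injection $\delta : X \hookrightarrow X^* = \Delta_{|X|}$ sending each element of $X$ to the corresponding simplex vertex; that is, $\varphi^*(p) = \varphi(p \circ \delta)$. So a selection function for $\varphi^*$ can be manufactured by pulling an input $p : X^* \to \mathbb{R}$ back along $\delta$, applying $\varepsilon$, and pushing the result forward along $\delta$ again. Concretely, I would define $\varepsilon^* \in J_{\mathbb{R}}(X^*)$ by
\[ \varepsilon^*(p) = \delta\big(\varepsilon(p \circ \delta)\big), \]
which is just the mixed strategy given by the point mass at the pure strategy that $\varepsilon$ selects for the restriction of $p$ to the vertices of the simplex.

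To verify that $\varepsilon^*$ attains $\varphi^*$, I would fix an arbitrary $p : X^* \to \mathbb{R}$ and note first that $p \circ \delta$ is a well-defined element of $X \to \mathbb{R}$, hence a legitimate argument for $\varepsilon$ and for $\varphi$. Since $\varphi$ is total we have $p \circ \delta \in \mathrm{dom}(\varphi)$, and since $\varepsilon$ attains $\varphi$ this gives $(p \circ \delta)(\varepsilon(p \circ \delta)) \in \varphi(p \circ \delta)$. Then I would simply unwind the definitions: $p(\varepsilon^*(p)) = p\big(\delta(\varepsilon(p \circ \delta))\big) = (p \circ \delta)(\varepsilon(p \circ \delta))$, which lies in $\varphi(p \circ \delta) = \varphi^*(p)$. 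Because $\varphi^*$ is total (this was already observed in the proof of lemma \ref{mixed-extensions-multilinear}, since $\varphi$ is total), establishing $p(\varepsilon^*(p)) \in \varphi^*(p)$ for every $p \in X^* \to \mathbb{R}$ is exactly the required attainment condition.

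The step I expect to be the main obstacle is, frankly, only bookkeeping: keeping straight that $\delta$ runs $X \to X^*$ while $p$ runs $X^* \to \mathbb{R}$, so that $p \circ \delta$ is the correctly-typed object to feed to $\varepsilon$, and the collapse $p(\delta(x)) = (p \circ \delta)(x)$ is what makes the membership chain close up. One should also double-check that no topological, closedness or convexity hypothesis is covertly needed here — it is not, since the claim is merely the existence of a selection function, with no regularity demanded of $\varepsilon^*$; the finiteness and nonemptiness of $X$ are used only to ensure that $X^*$ and the injection $\delta$ are well-defined and that $\varepsilon$, and hence $\varepsilon^*$, can return a value at all.
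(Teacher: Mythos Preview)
Your proposal is correct and essentially identical to the paper's own proof: both define $\varepsilon^*(p) = \delta(\varepsilon(p \circ \delta))$ and then unwind to get $p(\varepsilon^*(p)) = (p \circ \delta)(\varepsilon(p \circ \delta)) \in \varphi(p \circ \delta) = \varphi^*(p)$. Your additional remarks about totality and bookkeeping are accurate but add nothing beyond what the paper's one-line verification already covers.
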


\begin{proof}
We define $\varepsilon^* \in J_{\mathbb R} \Delta_{|X|}$ by the equation
\[ \varepsilon^* (p) = \delta (\varepsilon (p \circ \delta)) \]
where $\delta : X \hookrightarrow \Delta_{|X|}$. Then
\[ p (\varepsilon^* (p)) = p (\iota (\varepsilon (p \circ \iota)) = (p
\circ \iota) (\varepsilon (p \circ \iota)) \in \varphi (p \circ \iota)
= \varphi^* (p) \]
for all $p \in \Delta_{|X|} \to \mathbb R$.
\end{proof}

\begin{theorem}[Existence theorem for finite
  games]\label{generalised-nash-theorem}
Let $\mathcal{G}$ be a finite game such that each $\varphi_i$ is
attained by a selection function. Then $\mathcal{G}$ has a mixed
strategy abstract Nash
equilibrium.
\end{theorem}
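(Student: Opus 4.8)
The plan is to pass to the mixed extension $\mathcal G^*$ and apply the existence theorem for multilinear games (theorem \ref{multilinear-existence-thm}): since a generalised Nash equilibrium of $\mathcal G^*$ is by definition exactly a mixed strategy abstract Nash equilibrium of $\mathcal G$, it suffices to check that $\mathcal G^*$ satisfies the hypotheses of that theorem, namely that it is multilinear and that each of its quantifiers is attained by a selection function.

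First I would invoke lemma \ref{mixed-extensions-multilinear} to conclude that $\mathcal G^*$ is a multilinear game. This discharges the structural hypotheses: the move sets $X_i^* = \Delta_{|X_i|}$ are nonempty, compact and convex subsets of $\mathbb R^{|X_i|}$ (here the nonemptiness and finiteness of the $X_i$ in the definition of finite game is used, so that $|X_i| \geq 1$ and $\Delta_{|X_i|} \neq \varnothing$), the outcome functions $q_i^*$ extend to continuous multilinear maps, and each $\varphi_i^*$ has unilaterally closed graph with closed convex values and domain containing the image of $\U_{q^*}^i$.

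It then remains to supply, for each $i \in I$, a selection function attaining $\varphi_i^*$. By hypothesis $\varphi_i$ is attained by some selection function $\varepsilon_i$; by the definition of finite game $\varphi_i$ is total; and $X_i$ is nonempty and finite. These are precisely the hypotheses of lemma \ref{mixed-selection-functions}, whose proof uses nothing about the outcome space beyond composition with the simplex injection and so applies with $R_i$ in place of $\mathbb R$. It produces $\varepsilon_i^* (p) = \delta_i(\varepsilon_i(p \circ \delta_i))$ attaining $\varphi_i^*$. Feeding $\mathcal G^*$ and the family $(\varepsilon_i^*)_{i \in I}$ into theorem \ref{multilinear-existence-thm} yields a strategy profile $\pi$ for $\mathcal G^*$ lying in its own best response set, which is the desired equilibrium.

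The step most in need of care is confirming that the hypotheses of the supporting results genuinely align. In particular one should note that totality of $\varphi_i$ is part of the finite game axioms (it is what guarantees $\varphi_i^* = \varphi_i(-\circ\delta_i)$ has full domain, and hence contains $\operatorname{im}(\U_{q^*}^i)$, as the multilinearity hypothesis demands), and that lemma \ref{mixed-selection-functions} as stated is phrased for $R = \mathbb R$ so one is tacitly appealing to the evident generalisation of its one-line proof. Both points are already essentially settled by lemmas \ref{mixed-extensions-multilinear} and \ref{mixed-selection-functions}, so the argument is ultimately an assembly of those two lemmas together with theorem \ref{multilinear-existence-thm}, with no substantial new obstacle.
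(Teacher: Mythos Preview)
Your proposal is correct and follows exactly the paper's own proof: apply lemma~\ref{mixed-extensions-multilinear} to see that $\mathcal G^*$ is multilinear, apply lemma~\ref{mixed-selection-functions} to lift the attaining selection functions, and then invoke theorem~\ref{multilinear-existence-thm}. Your additional remark that lemma~\ref{mixed-selection-functions} is literally stated only for $R=\mathbb R$ but is needed for general $R_i$ is a fair observation, and your justification that its one-line proof goes through unchanged is correct.
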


\begin{proof}
$\mathcal{G}^*$ is a multilinear game which is attained by selection
functions by lemmas \ref{mixed-extensions-multilinear} and
\ref{mixed-selection-functions}. Therefore $\mathcal{G}^*$ has an
abstract Nash equilibrium by theorem \ref{multilinear-existence-thm}.
\end{proof}

In order to recover the classical Nash theorem we simply consider
finite games whose outcome spaces are $\mathbb R$ and define $q_i
(\pi) \in \mathbb R$ to be the utility of $\pi$ for player $i$, taking
all quantifiers to be $\max$. We
could instead define a finite game with single outcome space $\mathbb
R^n$ and let $(q(\pi))_i$ be the utility of $\pi$ for player $i$, and
consider selection functions $\varepsilon_i \in J_{\mathbb R^n} X_i$
maximising the $i$th coordinate:
\[ \varepsilon_i (p) = \argmax_{x \in X_i} (p(x))_i \]
However the quantifiers attained by these quantifiers are continuous only if $n = 1$. This
game has the same equilibria as the equivalent game with multiple
outcome spaces, but the Nash theorem cannot be proven in this way. It
is for this reason that we need to consider games with multiple
outcome spaces, in contrast to generalised sequential games (which do
not require continuity).

For a different example of a quantifier in a finite game, let $R_i$ be
normed and fix $\varepsilon > 0$ and $x_0 \in X_i$. Define
\[ \varphi_i (p) = B_\varepsilon (p(x_0)) \]
that is, the closed $\varepsilon$-ball around $p(x_0)$. This quantifier
is attained by the constant selection function $\varepsilon(p) =
x_0$. For a sequential game this would force the game to be trivial, but
this is not the case here: for example, if $\varphi_X$ is the quantifier defined
here and $\varphi_Y$ is the maximum quantifier with $R_Y = \mathbb R$
then a Nash equilibrium is a point $(a, b)$ such that
\[ \left| q_X (a, b) - q_X (x_0, b) \right| < \varepsilon \qquad
q_Y (a, b) = \max_{y \in Y} q_Y (a, y) \]

\ignore{
\section{Continuous games}

\begin{definition}[Continuous game]
A game $\mathcal G = (I, (X_i, R_i, q_i, \varphi_i)_{i \in I})$ is
called \emph{continuous} iff
\begin{itemize}
\item $I$ is finite;
\item Each $R_i = \mathbb R$;
\item For each $i \in I$ there is a specified set
\[ \mbox{\emph{im}}(\U_q^i) \subseteq F_i \subseteq X_i \to \mathbb
R \]
of bounded functions;
\item Each $\varphi_i (p)$ is closed and convex, and each $\varphi_i$
  has unilaterally closed graph
\end{itemize}
A continuous game is determined by a tuple
\[ \mathcal G = (I, (X_i, F_i, q_i, \varphi_i)_{i \in I}) \]
\end{definition}

\begin{lemma}
Let $F_i \subseteq X_i \to \mathbb R$ be a set of bounded
functions. Then $F_i$ is a normed vector space with norm
\[ \| f \|_\infty = \sup_{x \in X} | f(x) | \]
\end{lemma}

\begin{lemma}
Let $V$ be a topological vector space. Then $CL(V, \mathbb R)$ with
the topology of pointwise convergence is locally convex.
\end{lemma}

\begin{theorem}[Banach-Alaoglu theorem]
Let $V$ be a normed vector space. Then the set
\[ U(V) = \{ I \in CL(V, \mathbb R) \mid \mbox{ if } \| f \| \leq 1 \mbox{
  then } |I(f)| \leq 1 \} \]
is compact with respect to the topology of pointwise convergence on
$CL(V, \mathbb R)$.
\end{theorem}

\begin{definition}[Positive functional]
A functional $f \in X \to \mathbb R$ is called \emph{positive} iff
$f(x) \geq 0$ for all $x \in X$. We write $f \geq 0$.
\end{definition}

\begin{definition}[Mixed extension]
Let $\mathcal G = (I, (X_i, F_i, q_i, \varphi_i)_{i \in I})$ be a
continuous game. We define a game $M(\mathcal G)$ called the
\emph{mixed extension} of $\mathcal G$ by
\[ M(\mathcal G) = (I, (M(X_i), \mathbb R, M(q_i), M(\varphi_i))_{i
  \in I}) \]
where
\begin{itemize}
\item The strategy sets are
\[ M(X_i) = \{ I \in U(F_i) \mid I(\lambda x^X . 1) = 1, \mbox{ if } f
\geq 0 \mbox{ then } I(f) \geq 0 \} \]
\item The outcome functions are
\[ M(q_i) (I_1, \ldots, I_n) = \left( \bigotimes_{j = 1}^n I_j \right)
(q_i) \]
where $\otimes$ is the product of quantifiers;
\item The quantifiers are
\[ M(\varphi_i)(p) = \varphi_i (p \circ \delta_i) \]
where the maps $\delta_i \in X_i \to M(X_i)$ are defined by
\[ \delta_i (x) (f) = f (x) \]
\end{itemize}
\end{definition}
}
\section{The normal form of a sequential game}

In classical game theory every game can be put into the form of a
simultaneous game called its \emph{normal form}. The major motivation
for definining generalised simultaneous games was to generalise this
operation to give a notion of normal form for generalised sequential
games. This construction is given in this section and a form of
soundness of proven, namely that the solution concept for a
generalised sequential game, the so-called \emph{optimal strategies},
are mapped to generalised Nash equilibria.

\begin{definition}[Generalised sequential game]
A \emph{generalised sequential game} is determined by a set $R$ of \emph{outcomes}, a set
$X_i$ of \emph{moves} and a quantifier $\varphi_i \in S_R X_i$ for each $1
\leq i \leq n$, and an \emph{outcome function} $q \in \prod_{i = 1}^n X_i \to
R$. A \emph{strategy} in a sequential game is a tuple
\[ \pi \in \prod_{i = 1}^n \left( \prod_{j = 1}^{i - 1} X_j \to X_i
\right) \]
The strategy $\pi$ is called \emph{optimal} iff for all $\vec{a} =
(a_1, \ldots, a_{i - 1}) \in \prod_{j = 1}^{i - 1} X_j$ (where $i >
0$) we have
\[ q(\vec{a}, b^{\vec{a}}_i, \ldots, b^{\vec{a}}_n) \in
\varphi_i (\lambda x^{X_i} . q(\vec{a}, x, b^{\vec{a}, x}_{k + i},
\ldots, b^{\vec{a}, x}_n)) \]
where
\[ b^{\vec{a}}_j = \pi_j (\vec{a}, b^{\vec{a}}_i, \ldots,
b^{\vec{a}}_{j - 1}) \]
Given a strategy $\pi$ in a game, its \emph{strategic play} is
$\pi^\dag \in \prod_{i = 1}^n X_i$ given by
\begin{align*}
\pi^\dag_1 &= \pi_1\ (\hbox{modulo the isomorphism } \prod_{j = 1}^0
X_j \to X_1 = \{ 0 \} \to X_1 \cong X_1) \\
\pi^\dag_{i + 1} &= \pi_{i + 1} (\pi^\dag_1, \ldots, \pi^\dag_i)
\end{align*}
The strategic play of an optimal strategy is called an \emph{optimal
  play}.
\end{definition}

To be precise, this notion of sequential game is called a \emph{finite
  game with multiple optimal outcomes} in \citep{escardo11}. Infinite
games are avoided in this paper for simplicity.

Generalised sequential games were introduced in order to study a
particular higher-order function called the \emph{product of selection
  functions}. This is the function
\[ \otimes \in J_R X \times J_R Y \to J_R (X \times Y) \]
given by
\[ (\varepsilon \otimes \delta)(q) = (a, b_a) \]
where
\[ a = \varepsilon (\lambda x^X . q(x, b_x)) \qquad b_x = \delta
(\lambda y^Y . q(x, y)) \]
The product of selection functions has many interesting and
unintuitive properties, especially when infinitely iterated: for
example, it computes witnesses for the axiom of countable choice, and
computes exhaustive searches of certain infinite types in finite time
\citep{escardo10d}, both of which popular belief would have is
impossible. Every use of the product of selection
functions can be seen as the compution of an optimal play for a
suitable generalised sequential game.

\begin{theorem}\label{optimal-strategies}
Let $\mathcal G$ be a generalised sequential game whose quantifiers
are total and attained by selection functions $\varepsilon_i \in J_R
X_i$. Then
\[ \left( \bigotimes_{i = 1}^n \varepsilon_i \right) (q) \]
is an optimal play for $\mathcal G$.
\end{theorem}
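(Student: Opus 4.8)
The plan is to prove the statement by induction on the number of players $n$. The base case $n = 1$ is immediate: then $\bigotimes_{i=1}^{1}\varepsilon_i$ is just $\varepsilon_1$, and since $\varepsilon_1$ attains the total quantifier $\varphi_1$ we have $q(\varepsilon_1(q)) \in \varphi_1(q)$; identifying $q \in \prod_{i=1}^{1}X_i \to R$ with $\lambda x^{X_1}.\,q(x)$ and noting that the only history has length $0$, this is exactly the optimality condition, so $\varepsilon_1(q)$ is an optimal play.

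For the inductive step I would first record that the product of selection functions is associative under the evident reassociation of $X_1 \times \cdots \times X_n$ (a fact due to Escard\'o and Oliva), so that $\bigotimes_{i=1}^{n}\varepsilon_i = \varepsilon_1 \otimes \delta$ with $\delta = \bigotimes_{i=2}^{n}\varepsilon_i$. For each $x \in X_1$ let $\mathcal G_x$ be the $(n-1)$-player generalised sequential game whose outcome set, move sets and quantifiers are those of players $2,\dots,n$ of $\mathcal G$, with outcome function $q_x = \lambda(x_2,\dots,x_n).\,q(x,x_2,\dots,x_n)$. Its quantifiers are again total and attained by $\varepsilon_2,\dots,\varepsilon_n$, so the inductive hypothesis applies and says that
\[ \vec b_x \;:=\; \delta(q_x) \;=\; \Bigl(\bigotimes_{i=2}^{n}\varepsilon_i\Bigr)(q_x) \]
is an optimal play of $\mathcal G_x$. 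Unwinding the definition of the binary product then gives $\bigl(\bigotimes_{i=1}^{n}\varepsilon_i\bigr)(q) = (\varepsilon_1\otimes\delta)(q) = (a,\vec b_a)$ where $a = \varepsilon_1\bigl(\lambda x^{X_1}.\,q(x,\vec b_x)\bigr)$.

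It then remains to exhibit an optimal strategy of $\mathcal G$ whose strategic play is $(a,\vec b_a)$. I would define one explicitly, to sidestep any appeal to choice in the subgames: for a history $(a_1,\dots,a_{i-1})$ let $\pi_i(a_1,\dots,a_{i-1})$ be the first coordinate of $\bigl(\bigotimes_{j=i}^{n}\varepsilon_j\bigr)\bigl(\lambda(x_i,\dots,x_n).\,q(a_1,\dots,a_{i-1},x_i,\dots,x_n)\bigr)$. Because the restriction of $q$ along a history beginning with $x$ coincides with the restriction of $q_x$ along the remainder, the part of $\pi$ playing for players $2,\dots,n$ after an initial move $x$ is exactly the strategy built this way for $\mathcal G_x$, which by the inductive hypothesis is optimal for $\mathcal G_x$ with strategic play $\vec b_x$. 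Comparing the recursion for the strategic play with the recursion defining the auxiliary families $b^{\vec a}_j$ then gives $\pi^\dag_1 = a$ and $(\pi^\dag_2,\dots,\pi^\dag_n) = \vec b_a$, i.e.\ $\pi^\dag = \bigl(\bigotimes_{i=1}^{n}\varepsilon_i\bigr)(q)$; moreover the optimality condition of $\pi$ at a level $i \geq 2$ and history $(x,a_2,\dots,a_{i-1})$ is, after substituting $q_x$ for $q$, the optimality condition of the corresponding strategy of $\mathcal G_x$, which holds by the inductive hypothesis. At level $i = 1$ the condition reads $q(\pi^\dag) \in \varphi_1\bigl(\lambda x^{X_1}.\,q(x,\vec b_x)\bigr)$, and since $q(\pi^\dag) = q(a,\vec b_a) = \bigl(\lambda x^{X_1}.\,q(x,\vec b_x)\bigr)\bigl(\varepsilon_1\bigl(\lambda x^{X_1}.\,q(x,\vec b_x)\bigr)\bigr)$ this follows from $\varepsilon_1$ attaining the total quantifier $\varphi_1$.

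The step I expect to be the main obstacle is precisely this last bookkeeping: checking that the auxiliary families $b^{\vec a}_j$ appearing in the optimality condition for $\mathcal G$, and the recursion computing $\pi^\dag$, restrict correctly to $\mathcal G_x$ once the players $2,\dots,n$ are relabelled, and that the explicit $\pi$ above really does have $(\bigotimes_{i=1}^{n}\varepsilon_i)(q)$ as its strategic play. It is also worth isolating the associativity of $\otimes$ as a preliminary lemma so that the first selection function can be peeled off cleanly; everything else reduces to the definition of $\otimes$ and a single use of attainment and totality of $\varphi_1$.
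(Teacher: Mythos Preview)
The paper does not prove this theorem; it is quoted without proof as a known result of Escard\'o and Oliva (the survey \citep{escardo11} cited just before the statement). So there is no proof in the paper to compare against.

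Your inductive argument is the standard one and is correct in outline. Two small points are worth tightening. First, the theorem as stated is about optimal \emph{plays}, but your inductive step needs that the restriction of your explicit strategy $\pi$ after an initial move $x$ is an optimal \emph{strategy} of $\mathcal G_x$, not merely that $\vec b_x$ is an optimal play. You are in fact proving the stronger statement ``the explicitly defined $\pi$ is optimal and $\pi^\dag = \bigl(\bigotimes_i \varepsilon_i\bigr)(q)$'' by induction on $n$; this should be said up front so that the phrase ``by the inductive hypothesis the restricted strategy is optimal'' is literally justified. Second, the appeal to associativity of $\otimes$ is harmless but not needed: with the usual right-fold convention for the iterated product, $\bigotimes_{i=1}^n \varepsilon_i = \varepsilon_1 \otimes \bigotimes_{i=2}^n \varepsilon_i$ is the definition, and the remaining bookkeeping (that the $b^{\vec a}_j$ of $\mathcal G$ restrict to those of $\mathcal G_x$ after relabelling) is exactly as you describe.
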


Now we give the normal form construction and prove that it maps
optimal strategies to generalised Nash equilibria.

\begin{definition}[Normal form]
Let $\mathcal{G}$ be a generalised sequential game. We define a simultaneous game
with single outcome space
\[ \mathcal G^\dag = (I^\dag, (X_i^\dag)_{i \in I}, R, q^\dag,
(\varphi_i^\dag)_{i \in I}) \]
called the \emph{normal form} of $\mathcal{G}$ as follows:
\begin{itemize}
\item $I^\dag = \{ 1, \ldots, n \}$;
\item Each $X_i^\dag = P_i \to X_i$ where $P_i = \prod_{j = 1}^{i - 1} X_j$;
\item $q^\dag (\pi) = q (\pi^\dag)$;
\item Each $\varphi_i^\dag (p) = \varphi_i (\lambda x^{X_i} . p
  (\lambda \sigma^{P_i} . x))$.
\end{itemize}
\end{definition}

\begin{theorem}\label{normal-form-soundness}
Let $\mathcal{G}$ be a sequential game and let $\pi$
be an optimal strategy for $\mathcal{G}$. Then $\pi$ is a generalised
Nash equilibrium of $\mathcal{G}^\dag$.
\end{theorem}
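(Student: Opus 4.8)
The plan is to unfold the definition of generalised Nash equilibrium for $\mathcal{G}^\dag$ and to match each of the resulting $n$ membership conditions against one instance of the optimality condition for $\pi$. Note first that $\pi$, being an optimal strategy for $\mathcal{G}$, is literally the same kind of tuple as a strategy profile for $\mathcal{G}^\dag$, since $\prod_i X_i^\dag = \prod_i(P_i \to X_i)$. By the unpacked form of the equilibrium definition, such a $\pi$ is a generalised Nash equilibrium of $\mathcal{G}^\dag$ iff for every $i \in \{1, \ldots, n\}$ we have $q^\dag(\pi) \in \varphi_i^\dag(\lambda f^{X_i^\dag}. q^\dag(\pi(i \mapsto f)))$. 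Substituting the definitions of $q^\dag$ and $\varphi_i^\dag$, the right-hand side rewrites as $\varphi_i(\lambda x^{X_i}. q((\pi(i \mapsto \lambda \sigma^{P_i}. x))^\dag))$; note that, since $\varphi_i^\dag$ only ever probes its argument at the constant functions $\lambda \sigma^{P_i}. x$, no non-constant unilateral deviation of player $i$ in the normal form has to be considered. So it suffices to prove, for each $i$, that
\[ q(\pi^\dag) \in \varphi_i\bigl(\lambda x^{X_i}. q((\pi(i \mapsto \lambda \sigma^{P_i}. x))^\dag)\bigr). \]

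The crux is a pair of routine index-chasing identifications of strategic plays. Fix $i$ and set $\vec a = (\pi^\dag_1, \ldots, \pi^\dag_{i-1}) \in P_i$, the length-$(i-1)$ prefix of the optimal play (the empty tuple when $i = 1$). First, an induction on $j$ from $i$ to $n$, using the recursion $b^{\vec a}_j = \pi_j(\vec a, b^{\vec a}_i, \ldots, b^{\vec a}_{j-1})$ together with the defining recursion for $\pi^\dag$, shows $b^{\vec a}_j = \pi^\dag_j$; hence $(\vec a, b^{\vec a}_i, \ldots, b^{\vec a}_n) = \pi^\dag$ and in particular $q(\pi^\dag) = q(\vec a, b^{\vec a}_i, \ldots, b^{\vec a}_n)$. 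Second, for $x \in X_i$ put $\pi^{(x)} = \pi(i \mapsto \lambda \sigma^{P_i}. x)$. Since $\pi^{(x)}$ agrees with $\pi$ except at coordinate $i$, its strategic play agrees with $\pi^\dag$ on coordinates $1, \ldots, i-1$, takes the value $x$ at coordinate $i$ (as $\pi^{(x)}_i$ is the constant function $x$), and for $j > i$ a further induction identifies $(\pi^{(x)})^\dag_j$ with $b^{\vec a, x}_j$, the continuation under $\pi$ of the partial play $(\vec a, x)$; hence $q((\pi^{(x)})^\dag) = q(\vec a, x, b^{\vec a, x}_{i+1}, \ldots, b^{\vec a, x}_n)$.

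Feeding both identities into the displayed requirement turns it into
\[ q(\vec a, b^{\vec a}_i, \ldots, b^{\vec a}_n) \in \varphi_i\bigl(\lambda x^{X_i}. q(\vec a, x, b^{\vec a, x}_{i+1}, \ldots, b^{\vec a, x}_n)\bigr), \]
which is exactly the optimality condition for $\pi$ instantiated at $\vec a = (\pi^\dag_1, \ldots, \pi^\dag_{i-1}) \in P_i$. Since $\pi$ is an optimal strategy this membership holds, and as $i$ ranged over all of $\{1, \ldots, n\}$ we conclude that $\pi$ is a generalised Nash equilibrium of $\mathcal{G}^\dag$. The only obstacle I anticipate is organisational rather than mathematical: keeping the three recursions (for $b^{\vec a}$, for $b^{\vec a, x}$, and for $(-)^\dag$) aligned, handling the degenerate case $i = 1$ where $P_1$ is a singleton, and reconciling the indexing in the optimality definition with that of $\pi^\dag$ (the displayed optimality condition in the excerpt even carries a typo, $b^{\vec a, x}_{k+i}$ where $b^{\vec a, x}_{i+1}$ is meant). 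There is no analytic or topological content here — everything is pure substitution once the two strategic-play identifications are set up.
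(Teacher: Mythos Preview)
Your proposal is correct and follows essentially the same route as the paper's own proof: fix $i$, set $\vec a = (\pi^\dag_1,\ldots,\pi^\dag_{i-1})$, identify $\pi^\dag$ with $(\vec a, b^{\vec a}_i,\ldots,b^{\vec a}_n)$ by induction on $j$, identify $(\pi(i\mapsto \lambda\sigma.x))^\dag$ with $(\vec a, x, b^{\vec a,x}_{i+1},\ldots,b^{\vec a,x}_n)$ by a second induction, and then read off the equilibrium condition from the optimality clause at $\vec a$. The paper's argument differs only in notation (writing $\tau$ for your $\pi^{(x)}$) and in routing the computation through $\U_{q^\dag}^i$ and $\Delta(\varphi_i^\dag)$ rather than the unpacked form you use.
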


\begin{proof}[Proof]
Let $1 \leq i \leq n$. It must be proven that
\[ (\U_{q^\dag}^i (\pi), \pi_i) \in \Delta (\varphi_i^\dag) \]

Let $\vec{a} = (\pi_{j = 1}^{i - 1})^\dag \in \prod_{j = 1}^{i - 1}
X_j$. Since $\pi$ is an optimal strategy for $\mathcal{G}$ we have
\[ q(\vec{a}, b^{\vec{a}}_i, \ldots, b^{\vec{a}}_n) \in
\varphi_i (\lambda x^{X_i} . q(\vec{a}, x, b^{\vec{a}, x}_{i + 1},
\ldots, b^{\vec{a}, x}_n)) \]

By induction on $j$ we have
\[ b^{\vec{a}}_j = \pi_j (\vec{a}, b^{\vec{a}}_i, \ldots, b^{\vec{a}}_{j -
  1}) = \pi_j (\pi_1^\dag, \ldots, \pi_{i - 1}^\dag, \pi_i^\dag,
\ldots, \pi_{j - 1}^\dag) = \pi_j^\dag \]
therefore
\[ q^\dag (\pi) = q (\pi^\dag) = q(\vec{a}, b^{\vec{a}}_i, \ldots,
b^{\vec{a}}_n) \]

We also have
\begin{align*}
&\varphi_i^\dag (\U_{q^\dag}^i (\pi)) \\
=\ &\varphi_i (\lambda x^{X_i} . \U_{q^\dag}^i (\pi) (\lambda \sigma^{P_i}
. x)) &&(\hbox{definition of } \varphi_i^\dag) \\
=\ &\varphi_i (\lambda x^{X_i} . q^\dag (\pi (i \mapsto \lambda
\sigma^{P_i} . x))) &&(\hbox{definition of } \U_{q^\dag}^i) \\
=\ &\varphi_i (\lambda x^{X_i} . q (\tau^\dag)) &&(\hbox{definition of
} q^\dag)
\end{align*}
where $\tau = \pi (i \mapsto \lambda \sigma^{P_i}
. x)$. By induction on $j$ we have
\[ \tau_j^\dag = \begin{cases}
\pi_j^\dag &\hbox{ if } 1 \leq j < i \\
x &\hbox{ if } i = j \\
\pi_j (\pi_1^\dag, \ldots, \pi_{i - 1}^\dag, x, \tau_{i + 1}^\dag,
\ldots, \tau_{j - 1}^\dag) &\hbox{ if } i < j \leq n
\end{cases} \]
We certainly have that $\tau^\dag$ coincides with $(\vec{a}, x,
b^{\vec{a}, x}_{i + 1}, \ldots, b^{\vec{a}, x}_n)$ at indices $1 \leq
j \leq i$. Moreover by induction on $i < j \leq n$ we have
\[ b^{\vec{a}, x}_j = \pi_j (\vec{a}, x, b^{\vec{a}, x}_{i + 1},
\ldots, b^{\vec{a}, x}_{j - 1}) = \pi_j (\vec{a}, x, \tau_{i +
  1}^\dag, \ldots, \tau_{j - 1}^\dag) = \tau_j^\dag \]
therefore
\[ \tau^\dag = (\vec{a}, x, b^{\vec{a}, x}_{i + 1}, \ldots,
b^{\vec{a}, x}_n) \]

We have therefore proven
\[ q^\dag (\pi) \in \varphi_i^\dag (\U_{q^\dag}^i (\pi)) \]
that is,
\[ (\U_{q^\dag}^i (\pi), \pi_i) \in \Delta(\varphi_i^\dag) \qedhere \]
\end{proof}

The converse is false because optimal strategies of
sequential games generalise the classical notion of subgame-perfect
equilibrium, which is a stronger condition than classical Nash
equilibrium (called an \emph{equilibrium refinement} in classical game
theory) \citep{escardo12}.

\ignore{
\begin{lemma}\label{normal-form-of-selection-function}
Let $\mathcal{G}$ be a sequential game in which the quantifier
$\varphi_i \in S_R X_i$ is attained by the selection
function $\varepsilon_i \in J_R X_i$. Define $\varepsilon_i^\dag
\in J_R X_i^\dag$ by
\[ \varepsilon_i^\dag (p) = \lambda \pi^{P_i} . \varepsilon_i (\lambda
x^{X_i} . p (\lambda \sigma^{P_i} . x)) \]
Then $\varphi_i^\dag$ is attained by $\varepsilon_i^\dag$.
\end{lemma}

\begin{proof}
Let $p \in X_i^\dag \to R$. Then
\begin{align*}
&p(\varepsilon_i^\dag (p)) \\
=\ &p(\lambda \pi^{P_i} . \varepsilon_i (\lambda x^{X_i} . p(\lambda
\sigma^{P_i} . x))) &&(\hbox{definition of } \varepsilon_i^\dag) \\
=\ &(\lambda x^{X_i} . p (\lambda \sigma^{P_i} . x))(\varepsilon_i
(\lambda x^{X_i} . p (\lambda \sigma^{P_i} . x))) &&(*) \\
\in\ &\varphi_i (\lambda x^{X_i} . p (\lambda \pi^{P_i} . x))
&&(\varepsilon \hbox{ attains } \varphi) \\
=\ &\varphi_i^\dag (p) &&(\hbox{definition of } \varphi_i^\dag) \qedhere
\end{align*}
\end{proof}

The step marked $(*)$ is sometimes called $\beta$-expansion and is
easier to understand in reverse. The application of the function 
\[ \lambda x^{X_i} . p (\lambda \sigma^{P_i} . x) \]
to the argument
\[ \varepsilon_i (\lambda x^X . p (\lambda \sigma^{P_i} . x)) \]
involves substituting the input for the free occurence of the
variable $x$ in the expression $p (\lambda \sigma^{P_i} . x)$, obtaining
\[ p (\lambda \sigma^{P_i} . \varepsilon_i (\lambda
x^X . p(\lambda \sigma^{P_i} . x))) \]
Finally we rename the outer
bound variable $\sigma$ in this expression to $\pi$, obtaining what is
required.

Next we investigate the normal form of a 2-move sequential game, which
in general is defined by an outcome function $q \in X \times Y \to R$
and quantifiers $\varphi \in (X \to R) \to \mathcal{P} (R)$ and
$\psi \in (Y \to R) \to \mathcal{P} (R)$. In passing to the normal
form we have an outcome function $q^\dag \in X \times (X \to Y) \to R$
defined by $q^\dag (x, y) = q (x, y(x))$ and a quantifier $\psi^\dag
\in S_R (X \to Y)$ defined by
\[ \psi^\dag (p) = \psi (\lambda y^Y . p (\lambda \sigma^X . y)) \]
The quantifier $\varphi$ is unchanged. If $\varphi$ and $\psi$ are
attained by selection functions $\varepsilon \in J_R X$ and
$\delta \in J_R Y$ then by lemma
\ref{normal-form-of-selection-function}, $\psi^\dag$ is attained by
the selection function $\delta^\dag \in J_R (X \to Y)$ defined by
\[ \delta^\dag (p) = \lambda \pi^X . \delta (\lambda y^Y . p (\lambda
\sigma^X . y)) \]

By \citep[theorem 5.2]{escardo11}, the optimal outcome of $\mathcal{G}$ is
\[ \overline{(\varepsilon \otimes \delta)} (q) \in R \]
By straightforward (but fiddly) manipulations we get the curious result
\[ \overline{(\varepsilon \otimes \delta)} (q) = \overline{(\varepsilon
  \otimes \delta^\dag)} (q^\dag) \]
That is, the operation of computing the optimal outcome of a
sequential game using the product of
selection functions commutes with the normal form
construction. However this result seems to only hold for sequential
games with at most 2 moves. Moreover the strategy $(\varepsilon \otimes
\delta^\dag)(q)$ is not an optimal strategy, although its strategic
play is an optimal play. (No closed-form expression is known for an
optimal strategy of a general sequential game.)
}

\section{2-player games and minimax strategies}

In this section the abstract notion of a $\psi$-$\varphi$ strategy is
defined, and used to show an intriguing connection between generalised
simultaneous games and proof theory. The reason for this terminology
is that a $\psi$-$\varphi$ strategy corresponds to a minimax strategy
in a 2-player game
with $\varphi = \max$ and $\psi = \min$. Note however that when
modelling a classical game as a generalised game all the quantifiers
will be $\max$, and so $\psi$-$\varphi$ strategies are distinct in
this sense from minimax strategies.

\begin{definition}[$\psi$-$\varphi$ strategy]
Let $\mathcal G$ be a $2$-player game with quantifiers $\varphi \in
S_{R_X} X$ and $\psi \in S_{R_Y} Y$. A
strategy $a \in X$ is called a \emph{$\psi$-$\varphi$ strategy} for player $1$
iff
\[ q_1(a, f(a)) \in \varphi (\lambda x^X . q_1(x, f(x))) \]
for all $f \in X \to Y$ with the property that for all $x \in X$,
\[ q_2(x, f(x)) \in \psi (\lambda y^Y . q_2(x, y)) \]
Similarly $b$ is a $\psi$-$\varphi$ strategy for player $2$ iff $q_2(g(b), b)
\in \psi (\lambda y^Y . q_2(g(y), y))$ whenever $q_1(g(y), y) \in \varphi
(\lambda x^X . q_1(x, y))$.
A \emph{$\psi$-$\varphi$ strategy profile} is one whose components are
both $\psi$-$\varphi$.
\end{definition}

The \emph{binary Berardi-Bezem-Coquand functional} is the higher-order function
\[ \bbc \in J_R X \times J_R Y \to J_R (X \times Y) \]
given by
\[ (\varepsilon \bbc \delta) (q) = (a, b) \]
where
\begin{align*}
a &= \varepsilon (\lambda x^X . q(x, \delta (\lambda y^Y . q(x, y))))
\\
b &= \delta (\lambda y^Y . q(\varepsilon (\lambda x^X . q(x, y)), y))
\end{align*}
Notice that the type of $\bbc$ is the same as the type of
$\otimes$. Moreover when infinitely iterated both provide proof interpretations (in the
modified realizability interpretation of Heyting arithmetic) of the
axiom of countable choice \citep{berardi98, berger02} and a certain
equivalence, namely interdefinability over system T, is shown in
\citep{powell12}. However the relationship
between the two functionals is not well understood, and only the
product of selection functions has previously been linked to game theory.

\begin{theorem}
Let $\mathcal{G}$ be a $2$-player game with single outcome space such
and total single-valued quantifiers attained by
selection functions $\varepsilon$, $\delta$. Then the strategy profile
\[ (\varepsilon \bbc \delta) (q) \in X \times Y \]
is a $\psi$-$\varphi$ strategy profile.
\end{theorem}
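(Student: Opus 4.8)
The plan is to unfold the definitions of both the binary Berardi--Bezem--Coquand functional and of a $\psi$-$\varphi$ strategy profile, and to verify the two defining conditions separately for the two components of $(\varepsilon \bbc \delta)(q) = (a,b)$. Since $\mathcal G$ has a single outcome space, $q_1 = q_2 = q$, and the quantifiers are total single-valued, so $\varphi(p)$ is a singleton for every $p$ and ``$r \in \varphi(p)$'' just means $r$ is the unique element of $\varphi(p)$, which (since $\varepsilon$ attains $\varphi$) equals $p(\varepsilon(p))$; likewise for $\psi$ and $\delta$. The key device is that attainment lets me replace the membership statement $q(x_0, y_0) \in \varphi(p)$ by the equation $q(x_0,y_0) = p(\varepsilon(p))$ whenever $x_0 = \varepsilon(p)$, and this is exactly the shape of the terms appearing in the definition of $\bbc$.

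First I would handle the first component, showing $a = \varepsilon(\lambda x^X . q(x, \delta(\lambda y^Y . q(x,y))))$ is a $\psi$-$\varphi$ strategy for player $1$. I must show $q(a, f(a)) \in \varphi(\lambda x^X . q(x, f(x)))$ for every $f : X \to Y$ satisfying $q(x, f(x)) \in \psi(\lambda y^Y . q(x,y))$ for all $x$. Here is the crucial observation: by attainment of $\psi$ by $\delta$, the unique element of $\psi(\lambda y^Y.q(x,y))$ is $q(x,\delta(\lambda y^Y . q(x,y)))$; and since $\psi$ is single-valued, the hypothesis on $f$ forces $q(x, f(x)) = q(x, \delta(\lambda y^Y.q(x,y)))$ for every $x \in X$. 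Hence the two functions $\lambda x^X . q(x,f(x))$ and $\lambda x^X . q(x,\delta(\lambda y^Y.q(x,y)))$ are equal as elements of $X \to R$. Call this common function $p_X$. Then $a = \varepsilon(p_X)$, and by attainment of $\varphi$ by $\varepsilon$ we get $q(a, f(a)) = p_X(a) = p_X(\varepsilon(p_X)) \in \varphi(p_X) = \varphi(\lambda x^X . q(x,f(x)))$, which is exactly what was required.

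The second component is symmetric: to show $b = \delta(\lambda y^Y . q(\varepsilon(\lambda x^X.q(x,y)), y))$ is a $\psi$-$\varphi$ strategy for player $2$, I must show $q(g(b), b) \in \psi(\lambda y^Y . q(g(y), y))$ for every $g : Y \to X$ with $q(g(y), y) \in \varphi(\lambda x^X . q(x,y))$ for all $y$. By attainment of $\varphi$ and single-valuedness, this hypothesis forces $q(g(y), y) = q(\varepsilon(\lambda x^X.q(x,y)), y)$ for every $y$, so the functions $\lambda y^Y . q(g(y),y)$ and $\lambda y^Y . q(\varepsilon(\lambda x^X.q(x,y)),y)$ coincide; call this $p_Y$. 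Then $b = \delta(p_Y)$, and attainment of $\psi$ by $\delta$ gives $q(g(b),b) = p_Y(b) = p_Y(\delta(p_Y)) \in \psi(p_Y) = \psi(\lambda y^Y . q(g(y),y))$, as needed. I do not anticipate a serious obstacle here; the one point requiring care is the substitution of equal functions into $\varphi$ and $\psi$, which is legitimate precisely because equal functions are literally the same argument — the single-valuedness of the quantifiers is doing the real work in turning the ``for all $x$'' (resp. ``for all $y$'') hypothesis into a genuine equality of the two $\lambda$-terms rather than merely a pointwise coincidence of set-values.
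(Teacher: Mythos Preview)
Your proof is correct and follows essentially the same route as the paper's: use single-valuedness of $\psi$ (together with attainment by $\delta$) to identify the function $\lambda x^X . q(x, f(x))$ with $\lambda x^X . q(x, \delta(\lambda y^Y . q(x,y)))$, and then apply attainment of $\varphi$ by $\varepsilon$ at $a$; the argument for $b$ is the symmetric one. Your version is in fact slightly more careful than the paper's, which asserts that $f$ itself is unique, whereas you correctly observe that only the composite $\lambda x^X . q(x, f(x))$ is determined---and that is all that is actually used.
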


\begin{proof}
Since $\psi$ is single-valued, the unique $f$ with the
given property is
\[ f(x) = b_x = \delta (\lambda y^Y . q(x, y)) \]
The first component in the strategy profile is
\[ a = \varepsilon (\lambda x^X . q(x, b_x)) \]
We therefore have
\[ q(a, b_a) = (\lambda x^X . q(x, b_x)) (\varepsilon (\lambda x^X
. q(x, b_x))) \in \varphi (\lambda x^X . q(x, b_x)) \]
as required. The proof for $b$ is symmetric.
\end{proof}

In particular, in a 2-player classical game if a player has outcome
function $q$ then they have a minimax strategy given by $(\argmax \bbc
\argmin)(q)$. 

\section{Conclusions}

For sequential games, the proof of the existence of equilibria uses
the product of selection functions and is constructive (theorem
\ref{optimal-strategies}). Due to the importance of the product of
selection functions, the constructive nature of the existence proof is
an important part of the theory. Theorem \ref{generalised-nash-theorem} is
similar to theorem \ref{optimal-strategies} but is nonconstructive.

Nash gave 2
different proofs of his existence theorem, one using the
Brouwer fixed point theorem \citep{nash50a} and the other using
the Kakutani fixed point theorem, a weaker form of the
Kakutani-Fan-Glicksberg theorem applicable only to Euclidean spaces
\citep{nash50b}. Of these, only the second appears to be amenable to
generalising as we have done. The Brouwer, Kakutani and Nash theorems
are all known to not be provable constructively but they
all have equivalent approximation theorems that are provable
constructively and are complete for the same complexity class
\citep{tanaka11, daskalakis06}.
The author plans to investigate the computation of abstract
Nash equilibria in subsequent papers.

It should be noted that theorem \ref{generalised-nash-theorem}, like
Nash's original theorem, has a simpler proof using the weaker Kakutani
theorem. The reason for proving the stronger theorem
\ref{multilinear-existence-thm}
is that it can be used to prove a stronger result which also
generalises Glicksberg's theorem \citep{glicksberg52}, a result which
generalises Nash's theorem for finite games to games whose strategy
sets are compact topological spaces and whose outcome functions are
continuous.

\bibliographystyle{plainnat}
\bibliography{/Users/juleshedges/Documents/Work/refs}

\end{document}